\numberwithin{equation}{section}
\numberwithin{figure}{section}
\newtheorem{theorem}{Theorem}[section]
\newtheorem{remark}[theorem]{Remark}
\newtheorem{definition}[theorem]{Definition}
\newtheorem{conjecture}[theorem]{Conjecture}
\newtheoremstyle{break}
  {\topsep}{\topsep}%
  {\itshape}{}%
  {\bfseries}{}%
  {\newline}{}%
\theoremstyle{break}
\newcommand*\rel@kern[1]{\kern#1\dimexpr\macc@kerna}
\newcommand*\widebar[1]{%
  \begingroup
  \def\mathaccent##1##2{%
    \rel@kern{0.8}%
    \overline{\rel@kern{-0.8}\macc@nucleus\rel@kern{0.2}}%
    \rel@kern{-0.2}%
  }%
  \macc@depth\@ne
  \let\math@bgroup\@empty \let\math@egroup\macc@set@skewchar
  \mathsurround\z@ \frozen@everymath{\mathgroup\macc@group\relax}%
  \macc@set@skewchar\relax
  \let\mathaccentV\macc@nested@a
  \macc@nested@a\relax111{#1}%
  \endgroup
}
\newcommand\Pone{\textrm{P}_{\textrm{I}} }
\newcommand{\orcidauthorA}{0000-0001-7504-4444}
\newcommand{\orcidauthorB}{0000-0002-0461-7580}
\title{Arithmetic dynamics of a discrete Painlev\'e equation}
\date{}
\thanks{This research was supported by the Australian Government through the Office of National Intelligence NISDRG grant \#NI240100145.}
\author[Nalini Joshi]{Nalini Joshi}
\thanks{NJ's ORCID ID is \orcidauthorA.}
\address{School of Mathematics and Statistics F07, The University of Sydney, NSW 2006, Australia}
\author{Pieter Roffelsen}
\thanks{PR's ORCID ID is \orcidauthorB.}
\email{nalini.joshi@sydney.edu.au}
\email{pieter.roffelsen@sydney.edu.au}
\subjclass[2020]{39A13, 33E17,37P05,11G20}
\begin{document}
\begin{abstract}
Motivated by arithmetic dynamics on elliptic curves, we consider bounds on the number of image points of a discrete Painlev\'e equation over finite fields and show a surprising connection to the Hasse bound. Moreover, we show that the orbits of the discrete Painlev\'e equation in a finite field lie on algebraic curves given by polynomials and we give explicit definitions of such polynomials. These results contrast sharply with the theory of discrete Painlev\'e equations over $\mathbb{C}$, where generic solutions are believed not to lie on algebraic curves.
\end{abstract}
\maketitle

\section{Introduction}


In this paper, we consider the orbits of a discrete Painlev\'e equation
\begin{equation}\label{eq:qp1}
\begin{cases}
\displaystyle\overline{x}&=\dfrac{t}{x-s^{-1}y},  \\
\displaystyle\overline{y}&=\dfrac{s\, x}{y},  \\
\displaystyle\overline{t}&=s\,t, 
\end{cases}
\end{equation}
over finite fields $\mathbb F_q$, $q=p^n$, for prime $p$ and integer $n\geq 1$. We show that the orbit lengths satisfy an analog of the Hasse bound and, moreover, that the orbits lie on algebraic curves. These results are surprising because it is known that the solutions of $q\Pone$ are transcendental functions \cite{bernardo,nishioka2009,nishioka2010,nishioka2017} .

Equation \eqref{eq:qp1} is known as $q$-discrete Painlev\'e one ($q\Pone$). In this name, $q$ refers to the iterative parameter, which 
we have renamed $s$, and it should not be confused with the order of the field $\mathbb F_q$. Equation \eqref{eq:qp1} shares remarkable properties with its continuum limit, which is the celebrated first Painlev\'e equation ($\Pone$), obtained as $s\to 1$. One of these is their association with linear systems of equations called Lax pairs, which characterizes $\Pone$ and $q\Pone$ as integrable systems. In this paper, we use the Lax pair given in \cite[\S 2.7]{murata} for $q\Pone$, which remains well-defined over finite fields. Like $\Pone$, $q\Pone$ has an autonomous form, whose solutions lie on elliptic curves \cite{d:10}. Motivated by the longstanding study of elliptic curves in finite fields \cite{silverman}, it is natural to ask what is known about solutions of $q\Pone$ in such fields. 


A large amount of work has been devoted to the study of dynamical systems in finite fields \cite{silvermandynamics}. Two specific directions concerned with integrable systems in finite fields are of interest here. One is the study of autonomous maps in the plane and how bounds on orbits are related to integrability  \cite{JRV06,robertsetal2003}. This corresponds to the case when $s=1$ in Equation \eqref{eq:qp1}. However, in our case there exists $r\geq 1$, a factor of $q-1$, such that $s^r=1$, so that $s\neq 1$ in general, meaning Equation \eqref{eq:qp1} is non-autonomous. The second direction focuses on methods to faithfully define initial value spaces of discrete Painlev\' e equations over $p$-adic fields, finite fields and their relation through `almost good reductions' \cite{kankisigma,kankirims}. These methods enable orbits to be well-defined over finite fields. Our main purpose is to establish distributions of orbit lengths, which has not been a topic of consideration in such earlier studies. 

Our main results give rise to bounds on orbit lengths that are analogous to the Hasse bound. (See Conjecture \ref{conj:numerical}.) Hasse's bound is an estimate of the number of points on an elliptic curve in a finite field. Our estimates differ because they incorporate an additional parameter $r$, which is the multiplicative order of $s$.

\subsection{Main results}
Our main results are summarised in Conjecture \ref{conj:numerical}. These conjectures arise from the study of orbits  of Equation \eqref{eq:qp1} over $\mathbb{F}_q$, with the prime power $q$ ranging from $2$ to $499$. Magma V2.28-23 was used to study over 200M orbits to verify these conjectures; see Section \ref{sec:codedata}. 

These observations lead naturally to the question: do the orbits lie on algebraic curves? We give an affirmative answer in Section \ref{sec:integral}, where the curves are computed explicitly. In Theorem \ref{thm:integrals}, we show that Equation \eqref{eq:qp1} has a rational \textit{integral of motion}, whenever $s$ is a root of unity, regardless of the underlying field. Using a Lax pair, we give an explicit formula to compute them. All orbits lie on algebraic curves defined by fibres of this integral.  In Section \ref{sec:genus}, we study the (geometric) genus of these curves, giving rise to a further conjecture that the genus must be generically one; see Conjecture \ref{conj:genus}.





\begin{definition}
Notation:
\begin{enumerate}[label={\rm (}\roman*{\rm )}, ref=, align=left,leftmargin=*]
    \item Let $\mathbb{F}_{q}^*:=\mathbb{F}_{q}\setminus\{0\}$. Fix $s\in\mathbb{F}_{q}^*$, denote the \textit{multiplicative order} of $s$ by $r$. That is, $r\geq 1$ is the smallest positive integer such that $s^r=1$.
    \item We reserve the symbol $\gamma$ for orbits of Equation \eqref{eq:qp1}. Over a finite field $\mathbb{F}_{q}$, all orbits are periodic, and, denoting by $m\geq 1$ the smallest period of an orbit $\gamma$, we write the orbit as
\begin{equation*}
\gamma=(\gamma_0,\gamma_1,\gamma_2,\ldots,\gamma_{m-1}),  
\end{equation*}
where the states $\gamma_k$, $0\leq k\leq m-1$, satisfy
$\gamma_{k+1}=\overline{\gamma}_k$ for $0\leq k\leq m-1$ and $\gamma_{0}=\overline{\gamma}_{m-1}$. A precise definition of states is given in Definition \ref{eq:defi_states}.
\item We denote the number of states in the orbit by $\#(\gamma)$. The multiplicative order $r$ of $s$ must be a divisor of both $q-1$ and $\#(\gamma)$, and we refer to $\#(\gamma)/r$ as the \emph{reduced orbit length} of $\gamma$.
\end{enumerate}
\end{definition}

\begin{conjecture}\label{conj:numerical}
 Let $\gamma$ denote any orbit of Equation \eqref{eq:qp1} over $\mathbb{F}_{q}$ and suppose $r$ is the multiplicative order of $s\in\mathbb{F}_{q}^*$. Then the following results hold true. 
\begin{enumerate}[label=\emph{\bf 1.2.\Alph*}., ref={1.2.\Alph*}, labelindent=0pt]
\item \label{conj:I} The number of points $\#(\gamma)$ satisfies
   \begin{equation}\label{eq:conj_I_bound}
    \#(\gamma)/r\leq q+2\sqrt{q}+1.
\end{equation} 
\item \label{conj:II}
Let $M_q$ be the positive integer defined by the ceiling
\begin{equation*}
    M_q=\left\lceil\frac{1}{4}\left(\sqrt{q}+\frac{1}{\sqrt{q}}-2\right)\right\rceil ,
\end{equation*}
and define bins
\begin{equation}\label{eq:bindef}
 B_m^{(q)}=\begin{cases}   \left[\frac{q+1-2\sqrt{q}}{m},\frac{q+1+2\sqrt{q}}{m}\right], &1\leq m\leq M_q-1,\\[6pt]
 \left[1,\frac{q+1+2\sqrt{q}}{m}\right], &m=M_q.
 \end{cases}
 \end{equation}
Then $\#(\gamma)/r$ lies in one of the $M_q$ disjoint bins $B_m^{(q)}$, $1\leq m\leq M_q$.
\end{enumerate}
\end{conjecture}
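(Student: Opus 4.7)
The plan is to realise each orbit as a subset of the $\mathbb{F}_{q}$-points of an algebraic curve of geometric genus at most one and then invoke the Hasse-Weil bound.

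Fix an orbit $\gamma$, set $\ell := \#(\gamma)/r$ and let $t_0$ be the $t$-coordinate of $\gamma_0$. Write $\Phi := \phi^r$ for the $r$-th iterate of the map \eqref{eq:qp1}; since $s^{r}=1$, the transformation $\Phi$ preserves $t$. By Theorem~\ref{thm:integrals} the rational integral $I(x,y,t)$ is constant along $\gamma$, equal to some value $c$. For $0\le j < r$, consider the affine plane curves
\[
C_j := \{(x,y) : I(x,y,\,s^{j}t_{0}) = c\};
\]
each is preserved by $\Phi$ and contains the $\ell$ distinct $\mathbb{F}_{q}$-rational points $\gamma_{j}, \gamma_{j+r}, \ldots, \gamma_{j+(\ell-1)r}$. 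Granting that the geometric genus of $C_j$ is at most one (the content of Conjecture~\ref{conj:genus}, with the $g=0$ case being strictly weaker for the forthcoming estimate), the Hasse-Weil bound applied to a smooth projective model of $C_j$ yields $\ell \le \#C_{j}(\mathbb{F}_{q}) \le q + 1 + 2\sqrt{q}$, which is Part~\ref{conj:I}.

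For Part~\ref{conj:II}, the crux is to show that $\Phi|_{C_j}$ acts on a smooth projective model of $C_j$ as translation by a rational point of order $\ell$ in its elliptic curve structure. Granting this, $C_{j}(\mathbb{F}_{q})$ partitions into cosets of that cyclic subgroup, all of size $\ell$, so $\#C_{j}(\mathbb{F}_{q}) = m_j \ell$ for a positive integer $m_j$, and the Hasse-Weil inequality $|\#C_j(\mathbb{F}_q)-(q+1)| \le 2\sqrt{q}$ reads $m_j \ell \in [q+1-2\sqrt{q},\, q+1+2\sqrt{q}]$, equivalently $\ell \in B_{m_j}^{(q)}$. The definition of $M_q$ is the threshold past which the consecutive intervals $[a/m, b/m]$, with $a = q+1-2\sqrt{q}$ and $b = q+1+2\sqrt{q}$, begin to overlap; the bins $B_{1}^{(q)}, \ldots, B_{M_q-1}^{(q)}$ are therefore pairwise disjoint, and $B_{M_q}^{(q)}$ absorbs the remaining tail. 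The translation claim for $\Phi|_{C_j}$ is the key algebraic input: it is classical in the autonomous limit $s=1$ (cf.\ Remark~\ref{rem:auto}), and I expect it to extend to $s\ne 1$ via the Lax pair of \cite[\S 2.7]{murata}, where the monodromy at a fixed spectral parameter gives an explicit Abel-Jacobi map from $C_j$ to its Jacobian intertwining $\Phi$ with a shift.

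The principal obstacles I foresee are (i) making the genus bound unconditional, since Conjecture~\ref{conj:genus} asserts genus one only generically; a complete argument would compute the Newton polytope of the Laurent polynomial $I-c$, apply Khovanskii's formula, and analyse the finitely many degenerate $(c,t_0)$ where the fibre becomes reducible or acquires additional singularities; (ii) excluding a non-trivial automorphism component, such as the hyperelliptic involution, in $\Phi|_{C_j}$, which would break the uniform orbit structure underlying Part~\ref{conj:II}; and (iii) a separate elementary treatment of the genus-zero fibres, for which Part~\ref{conj:I} still follows from $\#\mathbb{P}^1(\mathbb{F}_q)=q+1$, but Part~\ref{conj:II} needs a direct analysis of the orbit structure of finite-order elements of $\mathrm{PGL}_2(\mathbb{F}_{q})$ to recover the relation $\#C_j(\mathbb{F}_q)=m_j\ell$.
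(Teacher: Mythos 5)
The paper states \ref{conj:I} and \ref{conj:II} as a \emph{conjecture}, not a theorem: the authors provide no proof, only numerical evidence from over 200M orbits computed in Magma for prime powers $2\le q\le 499$, and the concluding section explicitly lists finding a proof as an open problem. So there is no proof of record to compare against. Your sketch reconstructs the heuristic that the paper does outline --- Theorem~\ref{thm:integrals} supplies the integral of motion $I_r$, Conjecture~\ref{conj:genus} would supply genus $\le 1$ for generic fibres, and Remark~\ref{rem:auto} invokes Hasse's bound in the autonomous case $s=1$ --- and in that sense you have identified the right ingredients and the right road map.

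However, the obstacles you list under (i)--(iii) are not secondary details to be mopped up later; they are precisely why the authors stop at a conjecture, and your sketch does not close any of them. The translation claim for $\Phi=\phi^r$ carries the entire weight of Part~\ref{conj:II} and is unverified: even for $s=1$ it is the substantive content of QRT theory (the reason \cite{d:10} is cited), and for $s\ne1$ the iterate $\Phi$ is a composition of $r$ distinct birational maps $X_{t}\to X_{st}$, for which the paper only invokes Carstea--Takenawa's observation that $\Phi$ preserves a Halphen pencil of index $r$ --- weaker than a biquadratic pencil and not automatically furnishing a group structure with a rational zero section. Your inference from ``$\Phi$ is translation by a rational point of order $\ell$'' to ``$\#\tilde C_j(\mathbb{F}_q)=m_j\ell$'' additionally requires that every $\Phi$-orbit on the fixed fibre have length exactly $\ell$, which you have not established and which fails outright on the genus-zero fibres you set aside (a finite-order element of $\mathrm{PGL}_2(\mathbb{F}_q)$ typically has fixed points) as well as on singular fibres of the pencil; the degenerate locus \eqref{eq:genuszerocondition} is nonempty. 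Finally, the genus input itself is only Conjecture~\ref{conj:genus}, verified computationally in Magma for small $r$. Taken together, what you have written is a conditional argument reducing Conjecture~\ref{conj:numerical} to two further open statements plus an unaddressed uniformity claim; it should be framed as such, not as a proof.
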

\begin{remark}[Autonomous case]\label{rem:auto}
   When $s=1$, the dynamical system has a bidegree $(2,2)$ integral of motion,
\begin{equation*}
   I_1=y-x+\frac{x}{y}-\frac{t}{x},
\end{equation*}   
whose generic fibres are elliptic curves. The  mapping given in \eqref{eq:qp1} then coincides with  addition on fibres and the statements in Conjecture \ref{conj:numerical} are a direct consequence of the Hasse bound.
\end{remark}
Conjecture \ref{conj:I} is the Hasse upper bound, well known for elliptic curves, and is illustrated in Figure \ref{fig:conjI}; see also Figure \ref{fig:conjIzoom}. From these figures, one can observe  that the bound \eqref{eq:conj_I_bound} is sharp for many of the values of $q$ and $r$ included. 

In Figure \ref{fig:conjI},  for any fixed value of $q$, one can see that the values of reduced orbit lengths only lie in certain bins, demarcated in red in the figure. This is the content of Conjecture \ref{conj:II}, which is a refinement of  \ref{conj:I}. 


\begin{figure}[t!]
    \centering
    \begin{subfigure}[t]{0.5\textwidth}
        \centering
        \includegraphics[width=\textwidth]{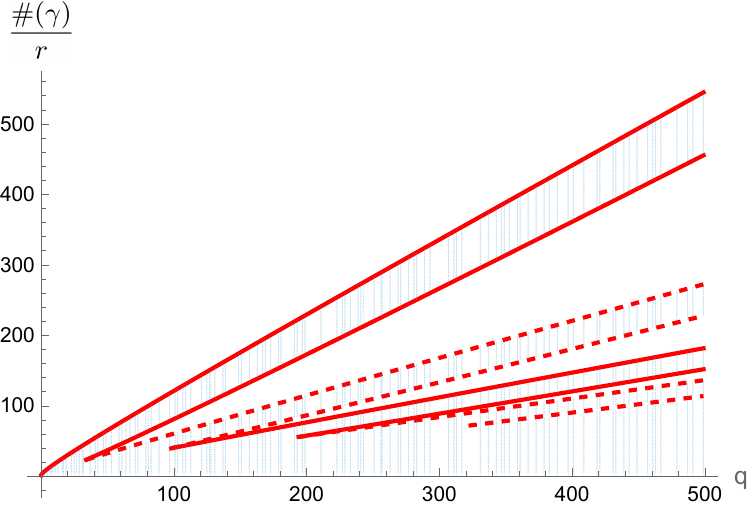}
        \caption{$r=1$}
    \end{subfigure}%
    ~ 
    \begin{subfigure}[t]{0.5\textwidth}
        \centering
        \includegraphics[width=\textwidth]{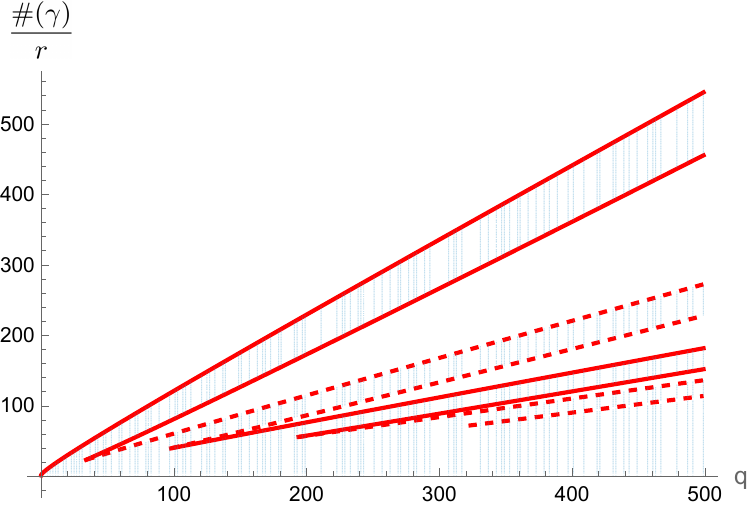}
        \caption{$r=2$}
    \end{subfigure}\\
    \begin{subfigure}[t]{0.5\textwidth}
        \centering
        \includegraphics[width=\textwidth]{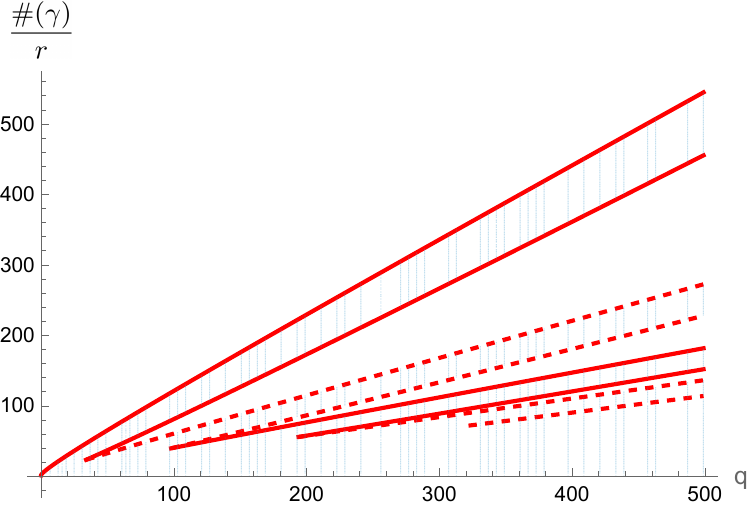}
        \caption{$r=3$}
    \end{subfigure}%
    ~ 
    \begin{subfigure}[t]{0.5\textwidth}
        \centering
        \includegraphics[width=\textwidth]{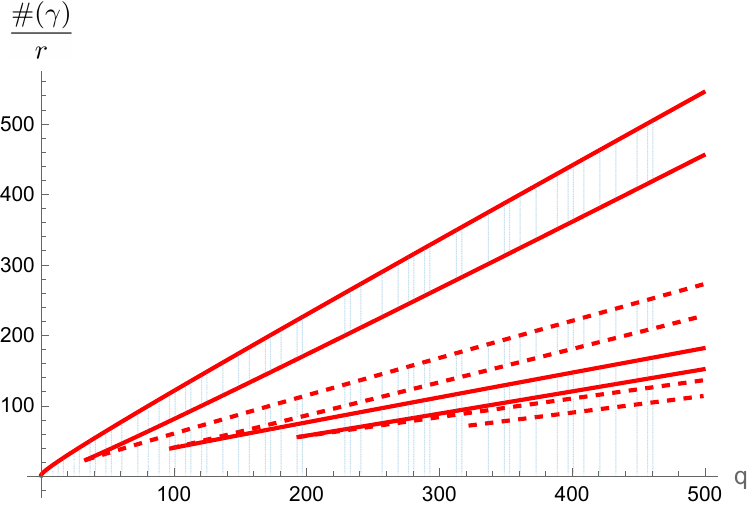}
        \caption{$r=4$}
    \end{subfigure}\\
    \begin{subfigure}[t]{0.5\textwidth}
        \centering
        \includegraphics[width=\textwidth]{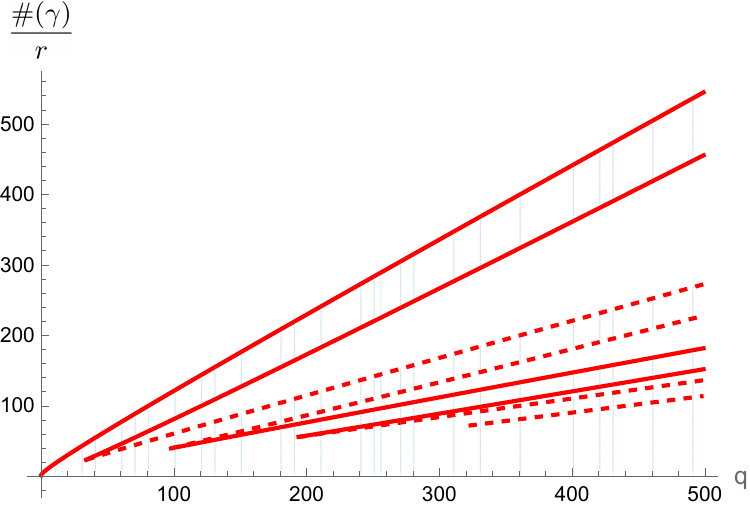}
        \caption{$r=5$}
    \end{subfigure}%
    ~ 
    \begin{subfigure}[t]{0.5\textwidth}
        \centering
        \includegraphics[width=\textwidth]{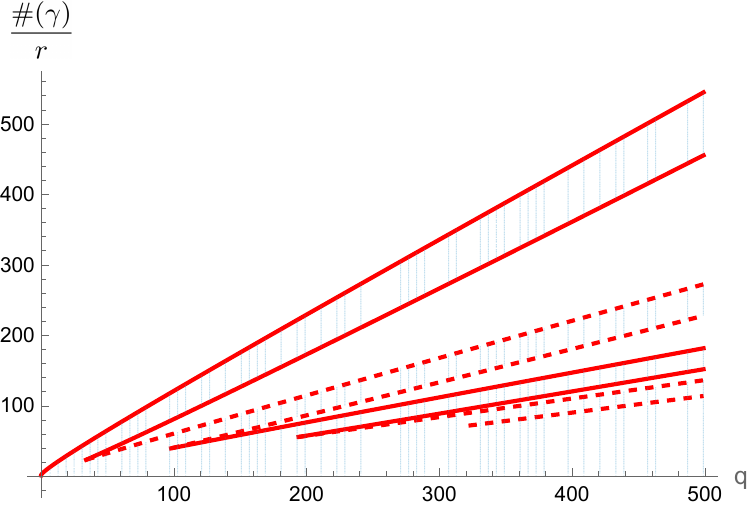}
        \caption{$r=6$}
    \end{subfigure}
    \caption{In each subfigure, for the value of $r$ indicated in the corresponding caption, Hasse's upper bound $q+2\sqrt{q}+1$ is given by the top most solid red curve. Upper and lower bounds for the bins $B_m^{(q)}$, $1\leq m\leq 4$, are displayed alternately in solid and dashed red lines. The different numbers of points on orbits divided by $r$ are shown in blue, for all orbits and all parameter values, for any given prime power $2\leq q\leq 500$ such that $r\,\bigm |\,(q-1)$.}
  \label{fig:conjI}
\end{figure}



\subsubsection{Example} Consider Equation \eqref{eq:qp1} over $\mathbb{F}_3$ with $s=2$. The multiplicative order of $s$ is $r=2$ and the corresponding integral of motion is
\begin{equation*}
    I_2=\frac{2 t^2}{x^2}+\frac{2 t y}{x}+\frac{2 t}{y}+\frac{2 x^2}{y^2}+\frac{2 x^2}{y}+2 x^2+x y+2 x+2 y^2.
\end{equation*}
Multiplying the equation for fibres, $I_2=c$, by $x^2y^2$ yields a pencil of surfaces in the variables $x,y,t$,
\begin{equation}\label{eq:polsurface}
    2 t^2 y^2+2 t x^2 y+2 t x y^3+2 x^4 y^2+2 x^4 y+2 x^4+x^3 y^3+2 x^3 y^2+2 x^2 y^4=c\,x^2y^2.
\end{equation}
Every orbit lies within a particular member, since $I_2$ is constant on orbits. Each of the surfaces is fibred through the projection that maps $(x,y,t)$ to $t$. Its fibres are the algebraic curves defined by equation \eqref{eq:polsurface} for fixed values of $c$ and $t$. The geometric genus of these algebraic curves within $\mathbb{P}^1\times \mathbb{P}^1$ over $\mathbb{F}_3$ is one for all $c\in \mathbb{F}_3$ and $t\in \mathbb{F}_3^*$.


\subsection{Code and data}\label{sec:codedata}
We generated data to support Conjecture \ref{conj:numerical} using Magma and analysed the data in Mathematica. The data and relevant code are available at \cite{orbitdata}. Furthermore, \cite{orbitdata} contains Magma code used to check
Conjecture \ref{conj:genus} as explained in Section \ref{sec:genus}. Organisation of the data is explained in Section \ref{sec:data}, as well as in the README file in \cite{orbitdata}.

\subsection{Outline of the paper}
In Section \ref{sec:numerical}, we provide the computational results to support Conjecture \ref{conj:numerical}. Resolution of singularities over $\mathbb{F}_q$ and algorithms to obtain data on orbits are described there.
In Section \ref{sec:integral}, integrals of motion of Equation \eqref{eq:qp1} are constructed and geometric genera of the algebraic curves they define are studied. The paper concludes with Section \ref{s:conc}, where we summarise our observations and suggest further questions.

\subsection{Acknowledgments}
This research was funded by the Australian Government through the Office of National Intelligence grant \# NI240100145.
The authors would like to thank the Magma Team at the University of Sydney, in particular John Voight and Allan Steel, for insightful conversations and collegial support.

\begin{figure}[t!]
    \centering
    \begin{subfigure}[t]{0.5\textwidth}
        \centering
        \includegraphics[width=\textwidth]{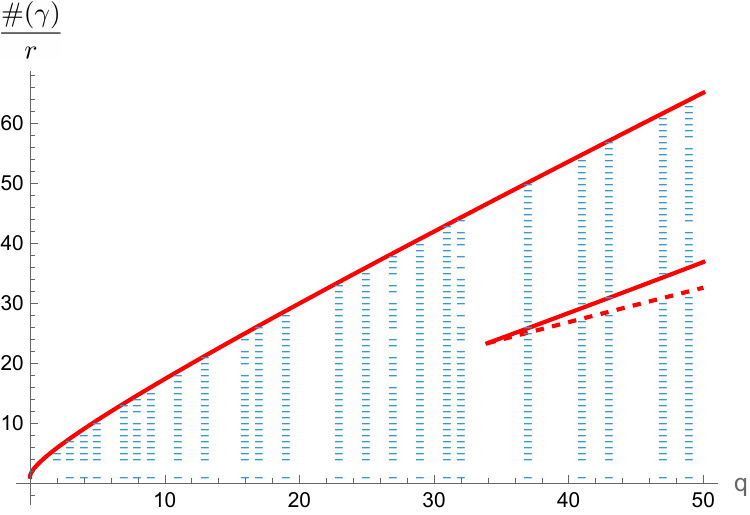}
        \caption{$r=1$}
    \end{subfigure}%
    ~ 
    \begin{subfigure}[t]{0.5\textwidth}
        \centering
        \includegraphics[width=\textwidth]{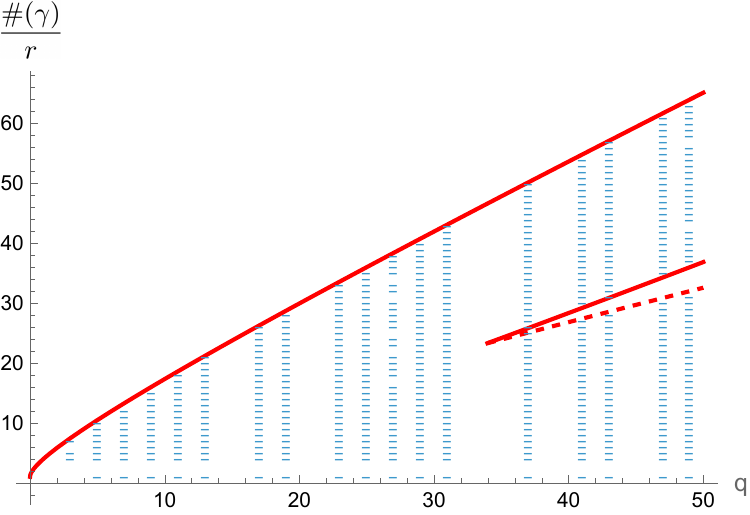}
        \caption{$r=2$}
    \end{subfigure}\\
    \begin{subfigure}[t]{0.5\textwidth}
        \centering
        \includegraphics[width=\textwidth]{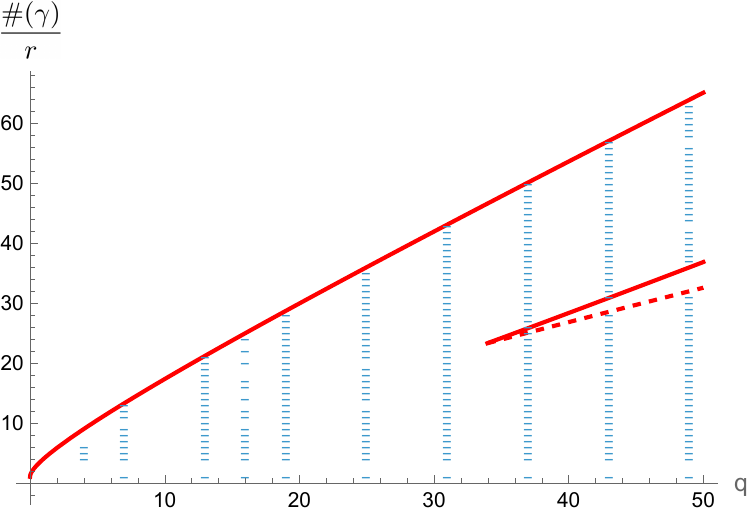}
        \caption{$r=3$}
    \end{subfigure}%
    ~ 
    \begin{subfigure}[t]{0.5\textwidth}
        \centering
        \includegraphics[width=\textwidth]{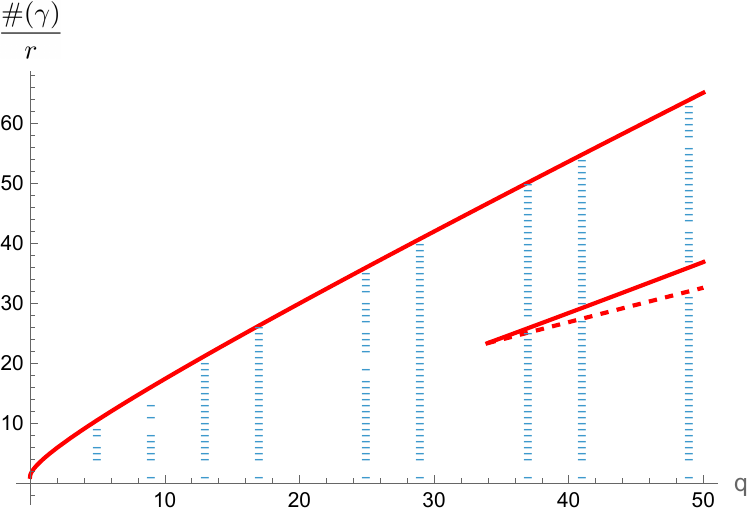}
        \caption{$r=4$}
    \end{subfigure}\\
    \begin{subfigure}[t]{0.5\textwidth}
        \centering
        \includegraphics[width=\textwidth]{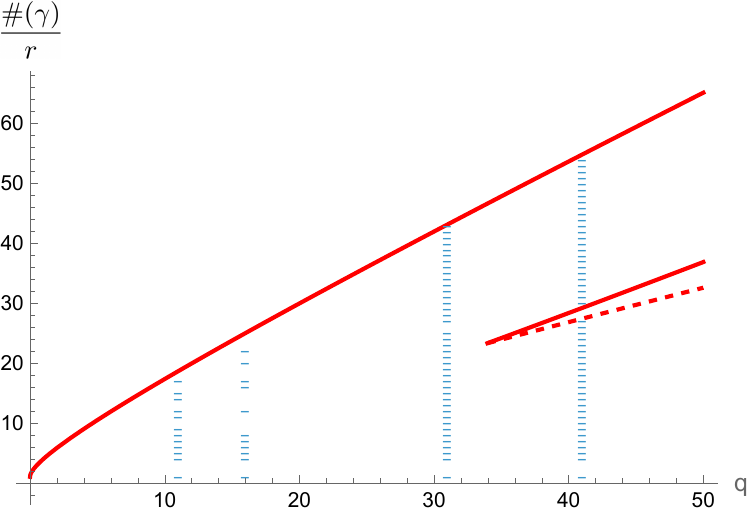}
        \caption{$r=5$}
    \end{subfigure}%
    ~ 
    \begin{subfigure}[t]{0.5\textwidth}
        \centering
        \includegraphics[width=\textwidth]{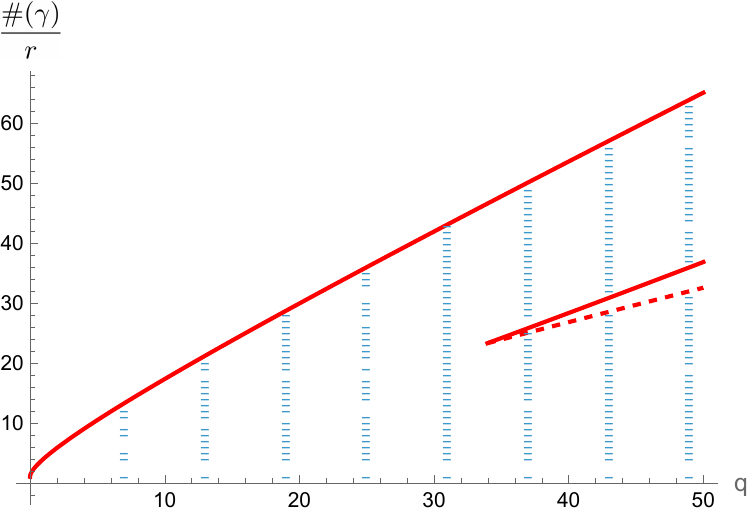}
        \caption{$r=6$}
    \end{subfigure}
    \caption{Figures in \ref{fig:conjI} zoomed in near the origin.}
  \label{fig:conjIzoom}
\end{figure}

\section{Counting points on orbits}
\label{sec:numerical}

In this section, we consider the arithmetic dynamics of Equation \eqref{eq:qp1} in $\mathbb{P}^1\times \mathbb{P}^1$ over a finite field $\mathbb{F}_{q}$. The dynamics becomes ill-defined at eight points, which include
\begin{equation}\label{eq:basepoints}
    \begin{aligned}
    b_1=(\infty,1),\quad b_2=(0,\infty),\quad
    b_3=(0,0),\quad b_4=(\infty,\infty).
\end{aligned}
\end{equation}
The remaining four base points each occur infinitely near one of these. We resolve the dynamics through these points, by using resolution of singularities well known from the construction first provided by Sakai \cite{s:01}. This yields a well-defined notion of orbits and below, we describe algorithms to compute them, which are used to verify Conjecture \ref{conj:numerical}.

\subsection{Resolution of Singularities}\label{sec:resolution}

Equation \eqref{eq:qp1} generally maps points $(x,y)\in \mathbb{F}_q^*\times \mathbb{F}_q^*$ to points $(\overline{x},\overline{y})\in \mathbb{F}_q^*\times \mathbb{F}_q^*$. However, when $s\,x-y=0$, we find that $(\overline{x},\overline{y})=(\infty,1)$. Extending the initial domain to $\mathbb{P}^1\times \mathbb{P}^1$ does not resolve the essential issue that the whole curve defined by $s\,x-y=0$ gets mapped to a single point. 
To resolve this singularity, as well as the others given in Equation \eqref{eq:basepoints},  we introduce four coordinate charts $(x_j,y_j)$, $1\leq j\leq 4$, defined in Equation \eqref{eq:coordcharts} below. 

Each coordinate chart gives rise to a line that was absent in the original space,
\begin{equation*}
     L_j=\{x_j=0,y_j\in \mathbb{A}^1\}   \qquad (1\leq j\leq 4),
\end{equation*}
leading to the \textit{initial value space} of Equation \eqref{eq:qp1} over $\mathbb{F}_q$.
\begin{equation}\label{eq:coordcharts}
\begin{aligned}
    x&=\frac{1}{x_1}, & y&=1+x_1y_1,\\
    x&=x_2(t+x_2y_2), & y&=\frac{1}{x_2},\\
    x&=x_3(t+x_3y_3), & y&=x_3^2(t+x_3y_3),\\
   x&=\frac{1}{x_4(s+x_4y_4)}, & y&=\frac{1}{x_4}.
\end{aligned}
\end{equation}

\begin{definition}\label{defi:initialvaluespace}
Let $X_{t,s}$ denote the variety over $\mathbb{F}_q$ obtained by gluing the lines $L_j$, $1\leq j\leq 4$, to $\{(x,y)\in(\mathbb{A}^1\setminus\{0\})^2\}$ through \eqref{eq:coordcharts}\footnote{Technically, one blows up $\mathbb{P}^1\times \mathbb{P}^1$ at eight base points \eqref{eq:basepoints} and then removes the strict transform of the coordinate lines $x=0$, $x=\infty$, $y=0$, $y=\infty$ \cite{s:01}. This construction may equally well first be done over $\overline{\mathbb{F}}_p$ and then quotiented by the induced action of $\operatorname{Gal}(\overline{\mathbb{F}}_p/\mathbb{F}_q)$, see e.g. \cite[II,\S 4]{mumfordred} for details.}. We call the resulting space the \emph{initial value space} of Equation \eqref{eq:qp1} over $\mathbb{F}_q$.
\end{definition}

On the newly constructed space $X_{t,s}$ all singularities of the evolution are resolved.
To see this, firstly, note that each line $L_j$ replaces a base point $b_j$, for $1\leq j\leq 4$. Returning to the case when $s\,x-y=0$, the image equals $(\overline{x},\overline{y})=b_1$ and, moving to the coordinates that cover $\overline{L}_1$, we have
\begin{equation*}
  \overline{x}_1=0,\quad \overline{y}_1=\frac{t}{x}.
\end{equation*}
This solves the initial issue, since different points on the curve $s\,x-y=0$ now get mapped to different points on $\overline{L}_1$. Next we consider the evolution of points on $L_1$. Taking a point in $L_1=\{x_1=0\}$,  we have
$(\overline{x},\overline{y})=b_2$ and, moving to the coordinates that cover $\overline{L}_2$, we obtain
\begin{equation*}
  \overline{x}_2=0,\quad \overline{y}_2=s\,t(1-s\,y_1).
\end{equation*}

Then, if we take a point in $L_2=\{x_2=0\}$,  we have
$(\overline{x},\overline{y})=b_3$ and, moving to the coordinates that cover $\overline{L}_3$, we get
\begin{equation*}
  \overline{x}_3=0,\quad \overline{y}_3=s\,y_2.
\end{equation*}
For points in $L_3=\{x_3=0\}$, we have
$(\overline{x},\overline{y})=b_4$ and
\begin{equation*}
  \overline{x}_4=0,\quad \overline{y}_4=\frac{s(s\,y_3-t)}{t}.
\end{equation*}
Finally, we consider the evolution of points in $L_4$.
Taking a point in $L_4=\{x_3=0\}$,  we have
\begin{equation*}
    \overline{x}=-\frac{s^2 t}{y_4},\quad
    \overline{y}=1,
\end{equation*}
which lies again in $\mathbb{F}_q^*\times \mathbb{F}_q^*$, unless $y_4=0$ in which case $(\overline{x},\overline{y})$ falls on the line $\overline{L}_1$ and
\begin{equation*}
    \overline{x}_1=0,\quad
    \overline{y}_1=0.
\end{equation*}
We thus see that the evolution is closed on our enlarged space and indeed all singularities have been resolved. In fact, one may show that the evolution \eqref{eq:qp1} lifts uniquely to an isomorphism from $X_{t,s}$ to $X_{st,s}$. The full evolution is schematically displayed in Figure \ref{fig:evolution}.

\begin{figure}[t]
    \centering
\begin{tikzpicture}[>=Stealth, node distance=1.5cm]

  \node (B) {$\mathbb{F}_q^* \times \mathbb{F}_q^*$};
  \node[right=of B] (L1) {$L_1$};
  \node[right=of L1] (L2) {$L_2$};
  \node[right=of L2] (L3) {$L_3$};
  \node[right=of L3] (L4) {$L_4$};

  \draw[->] (B) -- node[above] {\small $s\,x = y$} (L1);
  \draw[->] (L1) -- (L2);
  \draw[->] (L2) -- (L3);
  \draw[->] (L3) -- (L4);

  \draw[->] 
    (L4.north) -- ($ (L4) + (0,1.3) $) -- node[midway, above] {\small $y_4 = 0$}
     ($ (L1) + (0,1.3) $) -- (L1.north)
    ;

  \draw[->] 
    (L4.south) -- ($ (L4) + (0,-1.3) $) -- node[midway, below] {\small $y_4 \neq 0$}
     ($ (B) + (0,-1.3) $) -- (B.south);

  \draw[->] 
    (B.north) -- ($ (B) + (0,1.3) $) -- node[midway, above] {\small $s\,x \neq y  $}
    ($ (B) + (-2.2,1.3) $) -- ($ (B) + (-2.2,0) $)
    -- (B.west);

\end{tikzpicture}
    \caption{Schematic representation of the mapping Equation \eqref{eq:qp1} in $X_{t,s}$.}
    \label{fig:evolution}
\end{figure}
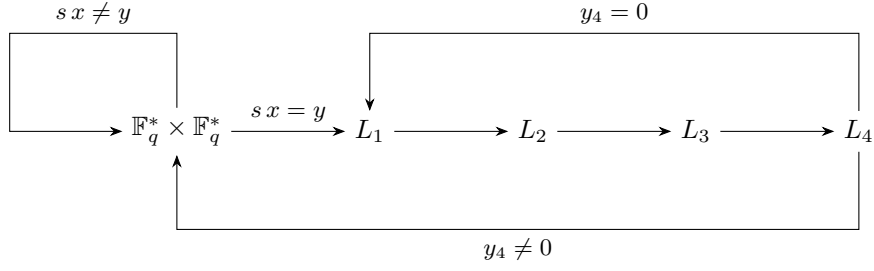

\subsection{Evolution and orbits} We are interested in the evolution of rational points, i.e. points whose coordinates all lie in $\mathbb{F}_q$, in the initial value space, of which there are $(q+1)^2$.
For the purpose of implementing the evolution, we introduce the concept of \textit{states}.
\begin{definition}\label{eq:defi_states}
 We define a \textit{state} $\gamma_*$ as a tuple
\begin{equation*}
    \gamma_*=(j,x,y,t,s),
\end{equation*}
where $0\leq j\leq 4$ an integer, $t,s\in \mathbb{F}_q^*$ and 
\begin{equation}\label{eq:state_condition}
    \begin{cases}
    (x,y)\in \mathbb{F}_q^*\times \mathbb{F}_q^* & \text{if $j=0$,}\\
    (x,y)\in \{0\}\times \mathbb{F}_q & \text{if $1\leq j\leq 4$.}
    \end{cases}
\end{equation}   
\end{definition}
Every state encodes a unique rational point on the initial value space $X_{t,s}$. Here, states with $j=0$ correspond to points in $\mathbb{F}_q^*\times \mathbb{F}_q^*$, whereas states with $j=j_*$, $1\leq j_*\leq 4$, correspond to points in the line
$L_{j_*}$.
The time evolution of a state is then computed by Algorithm \ref{alg:evolution} below.

\begin{algorithm}[H]\caption{Time Evolution}\label{alg:evolution}
\begin{algorithmic}[1]
\Function{Evolution}{$\gamma_*$}\Comment{Input a state.}
     \State{$(j,x,y,t,s)\gets \gamma_*$}
    \If {$j=0$} 
    \If {$s\,x-y\neq 0$}
    \State
    {$\overline{j} \gets 0$ and compute $(\overline{x},\overline{y})$ using equations \eqref{eq:qp1}}
     \Else
     \State 
     {$\overline{j} \gets 1$ and $(\overline{x},\overline{y}) \gets (0,t/x)$}
     \EndIf
     \ElsIf {$j=1$} 
       \State {$\overline{j} \gets 2$, $(\overline{x},\overline{y}) \gets (0,s\,t(1-s\,y))$}
     \ElsIf {$j=2$} 
    \State {$\overline{j} \gets 3$,  $(\overline{x},\overline{y}) \gets (0,s\,y)$}
    \ElsIf {$j=3$} 
    \State {$\overline{j} \gets 4$, $(\overline{x},\overline{y}) \gets (0,s(s\,y-t)/t)$}
    \ElsIf {$j=4$} 
    \If {$y\neq 0$}
    \State 
    {$\overline{j} \gets 0$,  $(\overline{x},\overline{y}) \gets (-s^2 t/y,1)$}
     \Else
     \State 
     {$\overline{j} \gets 1$,  $(\overline{x},\overline{y}) \gets (0,0)$}
     \EndIf
     \EndIf
 \State \textbf{return} $(\overline{j},\overline{x},\overline{y},s\,t,s)$. \Comment{Return evolution of input state}
 \EndFunction
\end{algorithmic}
\end{algorithm}

To compute the orbit of a state, one simply applies the time evolution iteratively until the initial state is reached again. (Note that the reduced length of orbits is on the vertical axis in Figures \ref{fig:conjI} and \ref{fig:conjIzoom}, while it lies on the horizontal axis in Figure \ref{fig:conjII}.)
\begin{algorithm}\caption{Computing an orbit}\label{alg:orbit}
\begin{algorithmic}[1]
\Function{Orbit}{$\gamma_0$}
\Comment{Input a state}
\State{$\gamma \gets (\gamma_0)$}\Comment{Initialise orbit}
\State{$\gamma_*\gets $\Call{Evolution}{$\gamma_0$}}   \Comment{Compute evolution using Algorithm \ref{alg:evolution}}
\While {$\gamma_*\neq \gamma_0 $}
  \State  {\textbf{Append} $\gamma_*$ to $\gamma$}
  \State{$\gamma_* \gets $\Call{Evolution}{$\gamma_*$}}   \Comment{Compute evolution using Algorithm \ref{alg:evolution}}
\EndWhile
 \State{Return $\gamma$}    \Comment{Return orbit}
 \EndFunction
\end{algorithmic}
\end{algorithm}

\subsubsection{Examples} We discuss some examples with $q$ ranging from $2$ to $4$ to illustrate the above definitions. 
Over $\mathbb{F}_2$, the only possible choice for $s$ and $t$ is $t=s=1$, and there are two orbits, one of length five,
\begin{equation*}
   (0,1,1,1,1)\mapsto (1,0,1,1,1)\mapsto (2,0,0,1,1)\mapsto (3,0,0,1,1)\mapsto (4,0,1,1,1),
\end{equation*}
and the other of length four,
\begin{equation*}
   (1,0,0,1,1)\mapsto (2,0,1,1,1)\mapsto (3,0,1,1,1)\mapsto (4,0,0,1,1).
\end{equation*}
 In this case, the upper bound in Conjecture \ref{conj:I} reads
   \begin{equation*}
    \#(\gamma)\leq 3+2\sqrt{2}\approx 5.83,
\end{equation*} 
which is thus sharp.

Over $\mathbb{F}_3$, picking $s=2$ and initial time $t=1$,
 there are a total of three orbits, of lengths $8$, $10$  and $14$, respectively given by
\begin{align*}
&(0, 1, 1, 1, 2)\mapsto (0, 2, 2, 2, 2)\mapsto (0, 2, 2, 1, 2)\mapsto (0, 1, 2, 2, 2)\mapsto 
(1, 0, 2, 1, 2)\mapsto\\
&(2, 0, 0, 2, 2)\mapsto (3, 0, 0, 1, 2)\mapsto (4, 0, 1, 2, 2) );
\end{align*}
\begin{align*}
&(2, 0, 2, 1, 2)\mapsto (3, 0, 1, 2, 2)\mapsto (4, 0, 0, 1, 2)\mapsto (1, 0, 0, 2, 2)\mapsto (2, 0, 1, 1, 2)\mapsto\\
&(3, 0, 2, 2, 2)\mapsto (4, 0, 2, 1, 2)\mapsto (0, 1, 1, 2, 2)\mapsto (0, 1, 2, 1, 2)\mapsto (1, 0, 1, 2, 2) ;
\end{align*}
\begin{align*}
&(3, 0, 2, 1, 2)\mapsto (4, 0, 0, 2, 2)\mapsto (1, 0, 0, 1, 2)\mapsto (2, 0, 2, 2, 2)\mapsto (3, 0, 1, 1, 2)
\mapsto\\
&(4, 0, 2, 2, 2)\mapsto (0, 2, 1, 1, 2)\mapsto (1, 0, 2, 2, 2)\mapsto (2, 0, 0, 1, 2)\mapsto (3, 0, 0, 2, 2)\mapsto\\
&(4, 0, 1, 1, 2)\mapsto (0, 2, 1, 2, 2)\mapsto (1, 0, 1, 1, 2)\mapsto (2, 0, 1, 2, 2).
\end{align*}
The corresponding reduced orbit lengths are $4$, $5$ and $7$ and the upper bound in Conjecture \ref{conj:I} reads
   \begin{equation*}
    \frac{\#(\gamma)}{2}\leq 4+2\sqrt{3}\approx 7.46,
\end{equation*} 
which is thus sharp.

As a last example, we consider the non-prime finite field $\mathbb{F}_4=\{0,1,a,a^2\}$, where $a^2+a+1=0$. The two non-autonomous choices $s=a$ and $s=a^2$ are related by the Frobenius automorphism $z\mapsto z^2$ of $\mathbb{F}_4$. We therefore just consider $s=a$, which has multiplicative order $r=3$. There are five orbits, two of length $12$, one of length $15$ and two of length $18$. We give one of the orbits of length $18$,
\begin{align*}
    & (0, a, a, 1, a) & &\mapsto & &(0, a, a, a, a)& &\mapsto & &(0, a^2, a, a^2, a)& &\mapsto & &(0, a, a^2, 1, a)& &\mapsto &  & \\
    &(1, 0, a^2, a, a)& &\mapsto & &(2, 0, 0, a^2, a)& &\mapsto & &(3, 0, 0, 1, a)& &\mapsto & &(4, 0, a, a, a)& &\mapsto & &\\
    &(0, a^2, 1, a^2, a)& &\mapsto &  &(1, 0, 1, 1, a)& &\mapsto & &(2, 0, 1, a, a)& &\mapsto & &(3, 0, a, a^2, a)& &\mapsto & &\\
    &(4, 0, 0, 1, a)& &\mapsto & &(1, 0, 0, a, a)& &\mapsto & &(2, 0, a^2, a^2, a)& &\mapsto &  &(3, 0, 1, 1, a)& &\mapsto & &\\
    &(4, 0, 1, a, a)& &\mapsto & &(0, 1, 1, a^2, a).  & & & &  & &
\end{align*}
The corresponding reduced orbit lengths are $4$, $5$ and $6$ and
the upper bound in Conjecture \ref{conj:I} reads
   \begin{equation*}
    \frac{\#(\gamma)}{3}\leq 9,
\end{equation*} 
which thus holds but is not sharp.

Upon choosing some values for $s,t\in \mathbb{F}_q^*$, we can compute the orbit starting with any point in $X_{t,s}$ using Algorithm \ref{alg:orbit}. We are interested in getting all the lengths of orbits constructed this way, without double counting orbits that are the same as sets. We do this using Algorithm \ref{alg:obtainorbitlengths} below.

\begin{algorithm}[H]\caption{Obtaining Orbit Lengths}\label{alg:obtainorbitlengths}
\begin{algorithmic}[1]
\Function{Lengths}{$s,t$}
\Comment{Input $s,t\in \mathbb{F}_q^*$}
\State{$R\gets \{(j,x,y,t,s):\text{Eq. \eqref{eq:state_condition} holds}, 0\leq j\leq 4\}$} \Comment{$R$ represents the set of rational points on $X_{t,s}$}
\State{$L \gets ()$}   \Comment{Initialise empty list to store orbit lengths}
\While {$R$ is not empty}
\State{Take a $\gamma_0\in R$}
\State{$\gamma \gets $ \Call{Orbit}{$\gamma_0$}}\Comment{Use Algorithm \ref{alg:orbit} to compute orbit}
\State{\textbf{Append} length $\#(\gamma)$ of orbit to $L$}
  \State{Remove all states in $R$ that occur in orbit $\gamma$.}
\EndWhile
 \State{Return $L$}    \Comment{Return list of orbit lengths}
 \EndFunction
\end{algorithmic}
\end{algorithm}

\subsection{Data on orbit lengths}\label{sec:data}
We implemented Algorithm \ref{alg:obtainorbitlengths} in Magma and used it to obtain all reduced orbit lengths of Equation \eqref{eq:qp1} over $\mathbb{F}_q$, for prime powers $2\leq q\leq 499$. See Section \ref{sec:codedata} for the corresponding data.

We sorted this data by the multiplicative order of $s$. Recall that, necessarily the multiplicative order $r$ of $s$ has to be a divisor of $q-1$. Conversely, for any divisor $r$ of $q-1$, there are $\varphi(r)$ elements $s\in \mathbb{F}_q^*$ with multiplicative order $r$, where $\varphi$ Euler's totient function.

For any prime power $2\leq q\leq 499$, for any divisor $r\,|\,q-1$, we obtained all reduced orbit lengths. For every prime power $2\leq q\leq 499$, we saved the data in a separate .txt file as a nested list,
\begin{equation*}
    ((r,L_r):r\,|\,q-1),
\end{equation*}
where, for every $r\,|\,q-1$, the list $L_r$ is a frequency list of reduced orbit lengths. Namely, its entries are tuples $(l,f_l)$, with $l$ a reduced orbit length and $f_l$ its absolutely frequency.

For example, the data for $q=5$ reads
\begin{verbatim}
 { 
 { 1 , {{ 1, 3 },{ 4, 6 },{ 5, 5 },{ 6, 3 },
        { 7, 3 },{ 8, 2 },{ 9, 3 },{ 10, 1 }}} ,
 { 2 , {{ 1, 1 },{ 4, 3 },{ 5, 2 },{ 6, 1 },
        { 7, 1 },{ 8, 1 },{ 9, 2 },{ 10, 1 }}} , 
 { 4 , {{ 4, 4 },{ 5, 2 },{ 6, 2 },{ 8, 2 },{ 9, 2 }}} 
 }
\end{verbatim}
The divisors of $5-1=4$ are $r=1,2,4$, and for each divisor, the list contains a corresponding entry with  a frequency list of reduced orbit lengths. For instance, for $r=2$ there are $3$ orbits of reduced length $4$. The lists are demarcated by braces so that they can easily be loaded in Mathematica for analysis. See Section \ref{sec:codedata} for the corresponding Mathematica code.

\subsection{An orbit length distribution}
From the orbit length data, we can compute the distribution of orbit lengths with respect to choosing an initial point in the initial value space $X_{t,s}$ uniformly random. We consider the example $q=p=499$, with $t=1$  and $s=140$, so that $s$ has multiplicative order $r=6$.

Conjecture \ref{conj:II} then says that all reduced orbit lengths lie in the $M_q=6$ bins,
\begin{equation}\label{eq:bins499}
\begin{aligned}
    B_1&=[456,544], &
    B_2&=[228,272], &
    B_3&=[152,181], \\
    B_4&=[114,136], &
    B_5&=[92,108], &
    B_6&=[1,90],
\end{aligned}
\end{equation}
which is indeed the case.

In Figure \ref{fig:conjII}, the absolute frequencies of all reduced orbits lengths are displayed. There is only one orbit with reduced length $544$. The corresponding absolute frequency in Figure \ref{fig:conjII} is thus $544$ since there are $544$ choices of initial values for which the corresponding orbit has this reduced length. Similarly, there are two orbits of reduced length $543$ and the corresponding absolute frequency is thus $1086$. 

The absolute frequencies of reduced orbit lengths occurring in bins $B_1$,\ldots,$B_6$ are respectively
\begin{equation*}
    113831,\quad 54446,\quad 13551,\quad 19140,\quad 6935,\quad 42097,
\end{equation*}
totalling $250000$.
This means for example that, taking initial values uniformly random, the chance of the  reduced length of the corresponding orbit falling in $B_1$ is $113831/250000\approx 46\%$.



\begin{figure}[t!]
    \centering
    \begin{subfigure}[t]{0.48\textwidth}
        \centering
        \includegraphics[width=\textwidth]{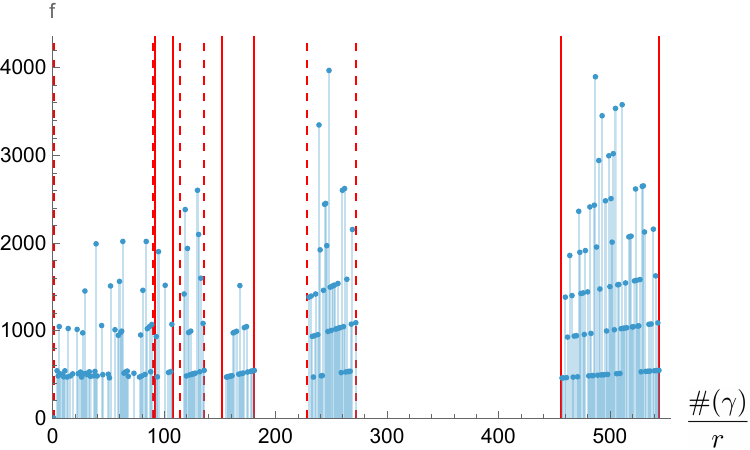}
    \end{subfigure}%
    ~ 
    \begin{subfigure}[t]{0.48\textwidth}
        \centering
        \includegraphics[width=\textwidth]{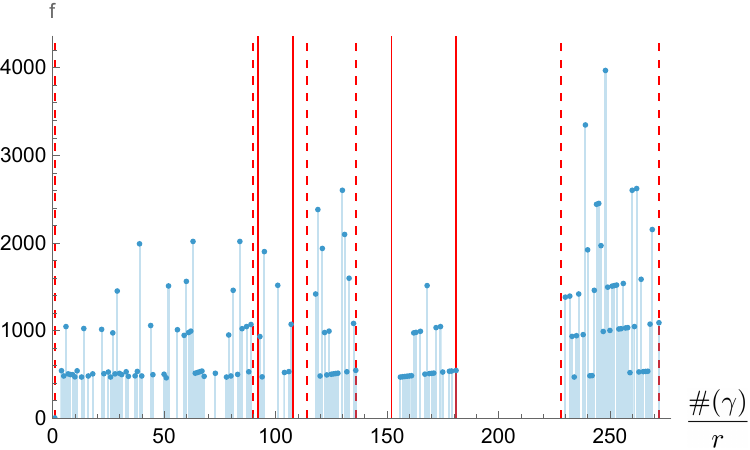}
    \end{subfigure}\vspace{5mm}\\
    \begin{subfigure}[t]{0.48\textwidth}
        \centering
        \includegraphics[width=\textwidth]{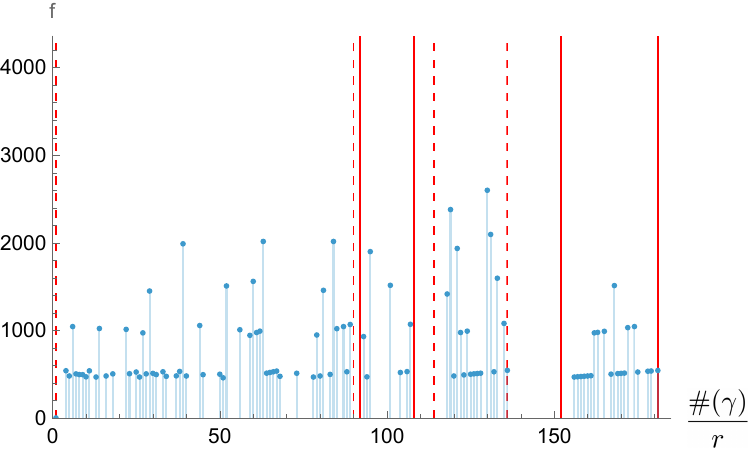}
    \end{subfigure}%
    ~ 
    \begin{subfigure}[t]{0.48\textwidth}
        \centering
        \includegraphics[width=\textwidth]{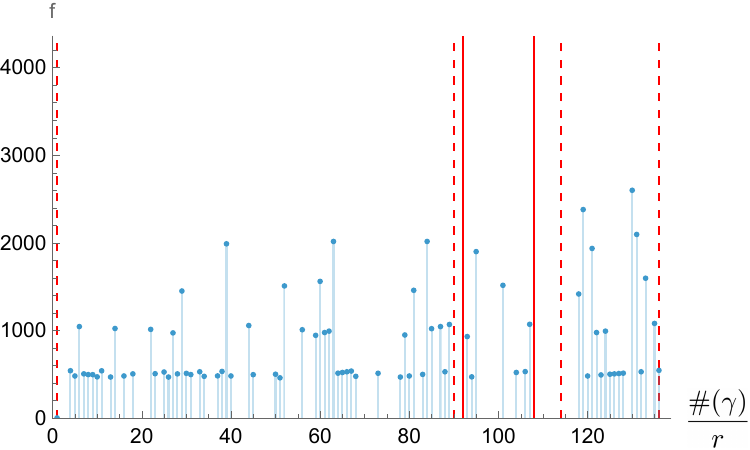}
    \end{subfigure}\vspace{5mm}\\
    \begin{subfigure}[t]{0.48\textwidth}
        \centering
        \includegraphics[width=\textwidth]{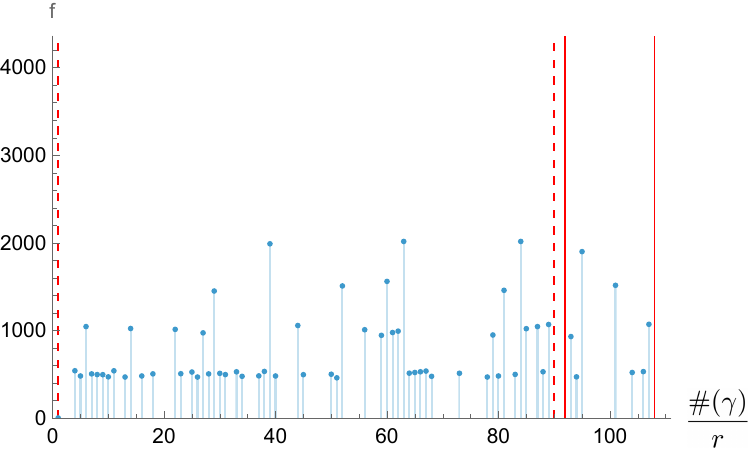}
    \end{subfigure}%
    ~ 
    \begin{subfigure}[t]{0.48\textwidth}
        \centering
        \includegraphics[width=\textwidth]{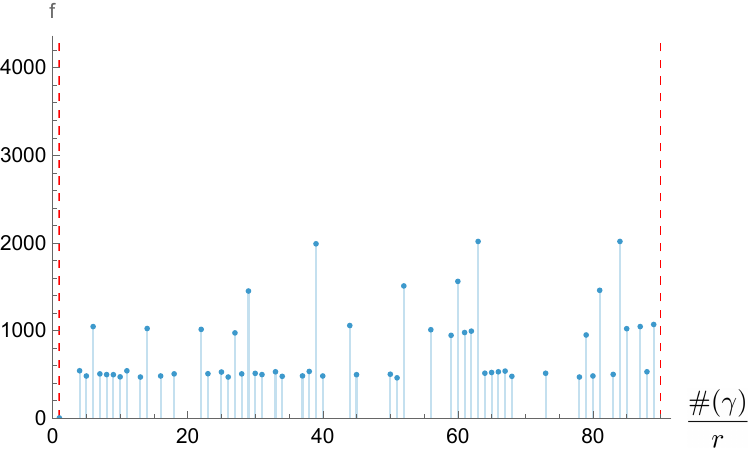}
    \end{subfigure}
    \caption{In these plots, for $q=p=499$, $t=1$ and $s=140$, which has multiplicative order $r=6$ in $\mathbb{F}_q$, in blue the absolute frequency $f$ of orbit lengths over $r$ for all choices of initial values as well as the six bins  \eqref{eq:bins499} demarcated alternately by solid and dashed red grid lines. In particular, taking random initial values uniformly, the plots show the unnormalised distribution of the length of the corresponding orbit over $r$.
    In the top-left plot the full distribution is shown, whereas the other plots show the distribution closer to the origin at different scales.
    }
  \label{fig:conjII}
\end{figure}

\section{Integrals of motion}\label{sec:integral}
When considering Equation \eqref{eq:qp1} over a finite field, $s$ necessarily has finite multiplicative order, say $r$. This means in particular that $s$ is a root of the $r$th cyclotomic polynomial, $\Phi_r(s)$.
Our main theoretical result, shows that Equation \eqref{eq:qp1} has an integral of motion,  whenever $s$ has finite multiplicative order, regardless of the underlying field.

\subsection{Construction of integrals of motion}

\begin{theorem}\label{thm:integrals}
Let $r\geq 1$ be any positive integer,  $R=\mathbb{Z}[s]/(\Phi_r(s))$ denote the $r$th cyclotomic integers and $Q$ its field of fractions. Then the Laurent polynomial $I_r\in R[x^{\pm 1},y^{\pm 1},t]$, defined by the trace
\begin{equation}\label{eq:defiintegrals}
    I_r=\operatorname{Tr}[A(s^{r-1})\cdot A(s^{r-2})\cdot\ldots\cdot A(s^2)\cdot A(s)\cdot A(1)]-(t^{r}+1),
\end{equation}
where
\begin{equation}\label{eq:Amatrix}
 A(z)=A_0+z\,A_1+z^2 A_2,\qquad A_2=\begin{bmatrix}
 1 & 0\\
 0 & 0
 \end{bmatrix},
\end{equation}
with
\begin{equation*}
\begin{aligned}
A_0&=\begin{bmatrix}
    t+x-xy & -w\,x\\
    w^{-1}(t+x-ty-2xy+xy^2) & x(y-1)
\end{bmatrix},\\   
A_1&=\begin{bmatrix}
    y-x+x/y-1-t/x & w\\
    w^{-1}(y-2x-1+xy+x/y-t/x) & 1
\end{bmatrix},
\end{aligned}
\end{equation*}
defines an integral of motion of $q\Pone$, i.e. $\overline{I}_r=I_r$ in $Q(x,y,t)$,
that is of bidegree $(2r,2r)$ in $(x,y)$.
\end{theorem}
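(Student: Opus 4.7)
The plan is to leverage the Lax pair of $q\Pone$ from \cite[\S 2.7]{murata}, of which $A(z)$ is the spectral matrix. That reference provides a companion matrix $B(z)$ governing the $t$-evolution, and the compatibility condition of the Lax pair (the zero-curvature form of the discrete dynamics) is equivalent to $q\Pone$ itself. In the conventions and gauge of the statement (with $\kappa_1=\kappa_2=\theta_1=1$), this compatibility reads
\begin{equation*}
\overline{A}(z)\,B(z) \;=\; B(sz)\,A(z),
\end{equation*}
understood as an identity of matrix-valued rational functions in $(x,y,t,z,w)$ modulo \eqref{eq:qp1}. My first step is to extract $B(z)$ from \cite{murata}, adjust gauge and normalisation so that it matches the statement's $A(z)$, and verify the compatibility by direct matrix computation.

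Given the compatibility, I telescope along the geometric progression $z=1,s,\ldots,s^{r-1}$. Iterating the identity for $k=0,1,\ldots,r-1$ yields
\begin{equation*}
\overline{A}(s^{r-1})\,\overline{A}(s^{r-2})\cdots \overline{A}(1)\cdot B(1) \;=\; B(s^r)\cdot A(s^{r-1})\cdots A(s)\,A(1).
\end{equation*}
Working in $R=\mathbb{Z}[s]/(\Phi_r(s))$, we have $s^r=1$ and hence $B(s^r)=B(1)$. Since $\det B(1)$ is a nonzero rational function in $(x,y,t)$ (verified by inspecting the explicit form of $B$), $B(1)$ is generically invertible; on the Zariski-open set where it is invertible the two matrix products are conjugate by $B(1)$, and cyclicity of the trace gives
\begin{equation*}
\operatorname{Tr}\bigl[\overline{A}(s^{r-1})\cdots \overline{A}(1)\bigr] \;=\; \operatorname{Tr}\bigl[A(s^{r-1})\cdots A(1)\bigr]
\end{equation*}
on this open set, and hence as an identity of rational functions in $Q(x,y,t)$. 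Finally, $\overline{t}^{\,r}+1 = s^r t^r + 1 = t^r + 1$ in $R$, so subtracting this invariant quantity preserves the equality, yielding $\overline{I}_r = I_r$.

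The regularity claims follow from direct inspection of $A(z)$. The gauge $w$ enters as an outer conjugation, $A(z)=\mathrm{diag}(1,w^{-1})\,M(z)\,\mathrm{diag}(1,w)$, with $M(z)$ independent of $w$, so the trace is automatically $w$-free; and since each entry of $A(z)$ is a polynomial in $t$ with $(x^{\pm 1},y^{\pm 1})$-coefficients, the trace lies in $R[x^{\pm 1},y^{\pm 1},t]$. For the bidegree, one expands the trace as a sum over closed cyclic paths on two vertices: each path of length $r$ has an equal number of $(1,2)$- and $(2,1)$-steps, and a careful tally using the support of each entry of $A_0,A_1,A_2$ shows that the maximum and minimum $(x,y)$-exponents appearing in the trace are $(r,r)$ and $(-r,-r)$ respectively, giving Laurent support in $[-r,r]\times[-r,r]$, i.e., bidegree $(2r,2r)$.

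The main obstacle is the bookkeeping in the first step: identifying $B(z)$ from \cite{murata} in a form compatible with the present gauge, and verifying the compatibility identity $\overline{A}(z)B(z)=B(sz)A(z)$ exactly. Once this alignment is achieved, the remainder is a short telescoping followed by trace-cyclicity, together with the routine regularity checks.
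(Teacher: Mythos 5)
Your approach via the Lax pair is the same one the paper takes: construct the matrix product $M(z)=A(s^{r-1}z)\cdots A(sz)A(z)$, use $\overline{A}(z)=B(sz)A(z)B(z)^{-1}$ to telescope, observe that $s^r=1$ makes the product conjugate to its barred version, and conclude invariance of the trace by cyclicity. Your version works directly at $z=1$, whereas the paper keeps $z$ generic, additionally uses $\mathcal{I}_r(sz)=\mathcal{I}_r(z)$ (from $M(sz)=A(z)M(z)A(z)^{-1}$) to show $\operatorname{Tr}M(z)=t^r+I_rz^r+z^{2r}$, and identifies the $b_0$ and $b_2$ coefficients from the eigenvalues of $A_0$ and $A_2$; but the core mechanism is identical. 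Your observations that the trace is $w$-free by conjugating out $\operatorname{diag}(1,w^{-1})$, and that $\overline{t}^{\,r}+1=t^r+1$, also match the paper.

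The genuine gap is in the bidegree claim. You argue by ``a careful tally using the support of each entry'' that the trace has Laurent support in $[-r,r]\times[-r,r]$. That tally only gives an \emph{upper bound} on the exponents that can appear in each term of the cyclic-path expansion; it does not show that the extremal monomials $x^{\pm r}$, $y^{\pm r}$ survive with nonzero coefficient after all terms are summed. Cancellation is a real possibility a priori, and the assertion that the bidegree is exactly $(2r,2r)$ requires showing it does not happen. The paper handles this by computing the leading coefficients explicitly in each of the four limits $x\to 0,\infty$ and $y\to 0,\infty$, choosing a gauge $w$ for each limit so that the leading-order coefficient of $A(z)$ becomes rank one and the leading term of the $r$-fold product can be evaluated in closed form (for instance, with $w=y-1$ one finds $I_r=-\bigl(1-1/y\bigr)^r x^r+\mathcal{O}(x^{r-1})$ as $x\to\infty$, and the analogous calculations give the other three leading terms of equation \eqref{eq:Ileading}). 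To close your proof you would need to either carry out those four limit computations or give a combinatorial argument that exhibits a term of each extremal degree that cannot be cancelled.
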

\begin{remark}
It is important to emphasise that $I_r\in R[x^{\pm 1},y^{\pm 1},t]$, rather than just $I_r\in Q[x^{\pm 1},y^{\pm 1},t]$, since we are interested in reductions of these integrals modulo primes. Such reductions are not always well-defined for elements of the latter ring. Furthermore,
\begin{equation}\label{eq:Ileading}
    I_r=-x^r\left(1-\frac{1}{y}\right)^r-\frac{t^r}{x^r}+(-1)^{r+1}y^r+J,
\end{equation}
where the remainder $J$ is a polynomial in $x,x^{-1},y,y^{-1}$ of degree less or equal to $r-1$ in each separate variable, so that there is no loss of degree under reductions modulo primes.
\end{remark}
\begin{remark}
        The matrix $A(z)$ is the coefficient matrix of the spectral part of the Lax pair for $q\Pone$ taken from \cite{murata}, normalised appropriately for our purposes. Its dependence on $w$ drops out when taking the trace, so that the integrals of motion are independent of $w$.
\end{remark}
\begin{remark}
    In the proof of Theorem \ref{thm:integrals}, we show that
    \begin{equation*}
\operatorname{Tr}[A(s^{r-1}z)\cdot A(s^{r-2}z)\cdot\ldots\cdot A(s^2z)\cdot A(sz)\cdot A(z)]=t^r+I_r z^r+z^{2r},        
    \end{equation*}
is an invariant of motion, from which it follows that $I_r$ is. It further yields the trivial invariant $t^r$. The determinant of this product of $A$ matrices is also trivially invariant, as follows from $|A(z)|=z^3$.
\end{remark}
\begin{remark}
We note that certain special cases of integrals of motion for $q$-Painlev\'e equations have appeared in the literature before \cite{grammaticosetal2011,carsteatakenawa}; see also \cite{wei} for an example involving the elliptic Painlev\'e equation. In \cite{grammaticosetal2011}, the cases when $s$ is a second or third root of unity over $\mathbb{C}$ were given.
In \cite{carsteatakenawa}, it is asserted that 
 the $r$-fold composition of a $q$-Painlev\'e equation preserves a Halphen pencil when $s$ is an $r$th root of unity in $\mathbb{C}$. This yields a different construction of integrals of motion, by solving the linear system $|-r K_X|$, where $-K_X$ is the anti-canonical divisor of the corresponding initial value space.
 Neither of these studies considered dynamics over a finite field. 
\end{remark}

Here are the first few integrals of motion,
\begin{align*}
I_1=&-x+y+\frac{x}{y}-\frac{t}{x},\\
I_2=&-\frac{t^2}{x^2}+\frac{2 t y}{x}+\frac{2 t}{y}-\frac{x^2}{y^2}+\frac{2 x^2}{y}-x^2-2 x y+2 x-y^2,\\
I_3=&-3 s x^2 y+3 s t y-\frac{3 s x^2}{y}+6 s x^2-3 s x y^2+3 s x y-\frac{t^3}{x^3}+\frac{3 t^2 y}{x^2}+\frac{3 t^2}{x y}-\frac{3 t y^2}{x}\\
&-\frac{3 t x}{y^2}+\frac{3 t x}{y}+\frac{x^3}{y^3}-\frac{3 x^3}{y^2}+\frac{3 x^3}{y}-x^3-3 x^2 y-\frac{3 x^2}{y}+6 x^2+y^3,\\
I_4=&-\frac{4 s t x}{y}+\frac{4 s t^2 y}{x}+4 s t x-8 s t y^2+4 s t y+\frac{4 s x^3}{y^2}-4 s x^3 y-\frac{12 s x^3}{y}+12 s x^3+4 s x y^3\\
&-4 s x y^2-\frac{t^4}{x^4}+\frac{4 t^3 y}{x^3}+\frac{4 t^3}{x^2 y}-\frac{6 t^2 y^2}{x^2}-\frac{4 t^2}{x}-\frac{6 t^2}{y^2}+\frac{4 t^2}{y}+\frac{4 t x^2}{y^3}-\frac{8 t x^2}{y^2}+\frac{4 t x^2}{y}\\
&+\frac{4 t y^3}{x}-4 t x y+4 t x-\frac{x^4}{y^4}+\frac{4 x^4}{y^3}-\frac{6 x^4}{y^2}+\frac{4 x^4}{y}-x^4+6 x^2 y^2-12 x^2 y+6 x^2-y^4,
\end{align*}
where, respectively, $s=1$, $s=-1$, $s^2+s+1=0$, $s^2+1=0$.

The construction of the integrals of motion in Theorem \ref{thm:integrals} relies on a Lax pair of $q\Pone$ \cite{murata}. In preparation of the proof of Theorem \ref{thm:integrals} we recall it here. It is given by
\begin{equation}\label{eq:laxpair}
\begin{aligned}
    Y(qz)&=A(z)Y(z),\\
    \overline{Y}(z)&=B(z)Y(z),
\end{aligned}
\end{equation}
where the matrix polynomial $A(z)$ is defined in \eqref{eq:Amatrix} and $B(z)$ is given by
\begin{equation*}
    B(z)=I+z^{-1}B_0,\qquad B_0=s\begin{bmatrix}
        x(1-y^{-1}) & w\,x \,y^{-1}\\
        -w^{-1} x\,y (1-y^{-1})^2 & -x(1-y^{-1})
    \end{bmatrix}.
\end{equation*}
Compatibility of \eqref{eq:laxpair} is equivalent to
\begin{equation*}
    \overline{A}(z)B(z)-B(sz)A(z)=0,
\end{equation*}
which in turn is equivalent to the evolution of $(x,y,t)$ defined by Equation \eqref{eq:qp1}  and $w\mapsto \overline{w}$,
\begin{equation}\label{eq:gaugeevolution}
    \overline{w}=\left(1- \frac{s\,x}{y}\right)w.
\end{equation}
Put differently, the evolution of $(x,y,t)$ and $w$ defined by $q\Pone$ and the above equation yields the evolution of the coefficient matrix $A(z)\mapsto \overline{A}(z)$, where
\begin{equation*}
    \overline{A}(z)=B(sz)A(z)B(z)^{-1}.
\end{equation*}
We are now ready to prove Theorem \ref{thm:integrals}.
\begin{proof}[Proof of Theorem \ref{thm:integrals}]
   Take any $r\geq 1$, denote by $R=\mathbb{Z}[s]/(\Phi_r(s))$ the $r$th cyclotomic integers and  by $Q=\mathbb{Q}[s]/(\Phi_r(s))$ the $r$th cyclotomic field. We treat $x,y,t,w$ as formal variables and in particular consider $A(z)$ and $B(z)$ as Laurent polynomials in $z$ with coefficients from the ring of $2\times 2$ matrices with entries from the field $Q(x,y,t,w)$.

    Construct the matrix
    \begin{equation}\label{eq:defi_M}
        M(z)=A(s^{r-1}z)\cdot A(s^{r-2}z)\cdot\ldots\cdot A(sz)\cdot A(z).
    \end{equation}
This is a matrix polynomial of degree $2\,r$ in $z$ that satisfies
  \begin{equation}\label{eq:Mevolution}
      \begin{aligned}
          M(sz)&=A(z) M(z) A(z)^{-1},\\
          \overline{M}(z)&=B(z) M(z) B(z)^{-1},
      \end{aligned}
  \end{equation}
  where  $\overline{\cdot}$ acts on the coefficients through the automorphism of $Q(x,y,t,w)$, that leaves $Q$ invariant, and acts on the variables by $(x,y,t,w)\mapsto (\overline{x},\overline{y},\overline{t},\overline{w})$ with images defined by equations \eqref{eq:qp1} and \eqref{eq:gaugeevolution}.

We now turn to its trace,
\begin{equation*}
    \mathcal{I}_r(z)=\operatorname{Tr} M(z),
\end{equation*}
which is a degree $2r$ polynomial in $z$ with coefficients in $Q(x,y,t)$. Due to equations \eqref{eq:Mevolution}, it satisfies
      \begin{align}
          \mathcal{I}_r(sz)&=\mathcal{I}_r(z),\label{eq:Ievolution1}\\
          \overline{\mathcal{I}}_r(z)&=\mathcal{I}_r(z).\label{eq:Ievolution2}
      \end{align}
By equation \eqref{eq:Ievolution1}, $\mathcal{I}_r(z)$ must a polynomial in $z^r$, so
\begin{equation*}
    \mathcal{I}_r(z)=b_0+b_1 z^r+b_2 z^{2r},
\end{equation*}
for some $b_0,b_1,b_2\in Q(x,y,t)$ that, by equation \eqref{eq:Ievolution2}, are each integrals of motion of $q\Pone$.  We will show that
\begin{equation}\label{eq:integralz}
    \mathcal{I}_r(z)=t^r+I_r z^r+z^{2r}.
\end{equation}
Firstly, since the eigenvalues of $A_2$ are $\{0,1\}$, we have
\begin{equation*}
    b_2=\operatorname{Tr} A_2^r=1^r+0^r=1.
\end{equation*}
Secondly, since the eigenvalues of $A_0$ are $\{0,t\}$,
\begin{equation*}
    b_0=\operatorname{Tr} A_0^r=t^r.
\end{equation*}
Finally, by the definition of $I_r$, see equation \eqref{eq:defiintegrals},
\begin{equation}\label{eq:directformulaintegral}
    I_r=\mathcal{I}_r(1)-(t^r+1)=b_1.
\end{equation}
This gives \eqref{eq:integralz} and shows that $I_r$
is an integral of motion of $q\Pone$.

Note that $I_r$ is a Laurent polynomial in $x$, $y$, and polynomial in $t$, with coefficients from $R$, since the same is true for $A(z)$, ignoring the dependence of $A(z)$ on $w$. In other words, $I_r\in R[x^{\pm 1},y^{\pm 1},t]$. It remains to be shown that $I_r$ has bidegree $(2r,2r)$ in $(x,y)$. We first consider the degree in $x^{-1}$. For convenience, we set $w=1$ so that
\begin{equation*}
    A(z)=-\frac{t\,z}{x}\begin{bmatrix}
        1 & 0\\
        1 & 0
    \end{bmatrix}+\mathcal{O}(1),
\end{equation*}
as $x\rightarrow 0$. Therefore
\begin{equation*}
    M(z)=(-1)^r s^{\frac{1}{2}r(r-1)}\frac{t^r\,z^r}{x^r}\begin{bmatrix}
        1 & 0\\
        1 & 0
    \end{bmatrix}+\mathcal{O}(x^{-(r-1)}),
\end{equation*}
as $x\rightarrow 0$. Noting that $(-1)^r s^{\frac{1}{2}r(r-1)}=-1$, we thus obtain
\begin{equation*}
    \mathcal{I}_r(z)=-\frac{t^r\,z^r}{x^r}+\mathcal{O}(x^{-(r-1)}),
\end{equation*}
and, by equation \eqref{eq:directformulaintegral}, 
\begin{equation*}
    I_r=-\frac{t^r}{x^r}+\mathcal{O}(x^{-(r-1)}),
\end{equation*}
as $x\rightarrow 0$.

To work out the degree of $I_r$ in $x$, it is convenient to set $w=y-1$, so that
\begin{equation*}
    A(z)=x\left(1-\frac{1}{y}\right)\begin{bmatrix}
        -(z+y) & -y\\
        +(z+y) & +y
    \end{bmatrix}+\mathcal{O}(1),
\end{equation*}
as $x\rightarrow \infty$. A similar calculation as above then gives
\begin{equation*}
    I_r=-\left(1-\frac{1}{y}\right)^r x^r+\mathcal{O}(x^{r-1}),
\end{equation*}
as $x\rightarrow \infty$.

To estimate the degree in $y^{-1}$, it is convenient to set $w=1$ and one quickly obtains
\begin{equation*}
    I_r=(-1)^{r+1} \frac{x^r}{y^r}+\mathcal{O}(y^{-(r-1)}),
\end{equation*}
as $y\rightarrow 0$.
  
Finally, to estimate the degree in $y$, it is convenient to set $w=y$, so that
\begin{equation*}
    A(z)=y\begin{bmatrix}
        z-x & z-x\\
        x & x
    \end{bmatrix}+\mathcal{O}(1),
\end{equation*}
as $y\rightarrow \infty$, and from this one obtains
\begin{equation*}
    I_r=(-1)^{r+1} y^r+\mathcal{O}(y^{r-1}).
\end{equation*}
In summary, we have shown that $I_r$, as a polynomial in $x,x^{-1},y,y^{-1}$, has degree $r$ with respect to each of the four variables individually. Even stronger, we have derived equation \eqref{eq:Ileading}. In particular $I_r$ has bidegree $(2r,2r)$ in $(x,y)$ and the theorem follows.
\end{proof}

\subsection{Genera of fibres}\label{sec:genus}

 As a corollary of Theorem \ref{thm:integrals}, any reduced orbit of Equation \eqref{eq:qp1} over a finite field lies in one of the fibres of the integral $I_r$,
   \begin{equation}\label{eq:fibres}
    \{(x,y)\in X_{t,s}:I_r(x,y)=c\},\qquad (c\in \mathbb{A}^1).
   \end{equation}
   To our surprise, the geometric genus of the fibres is generically one, so that 
   these higher degree equations essentially define elliptic curves.
\begin{conjecture}\label{conj:genus}
For fixed values of $s$ and $t$, the geometric genus of the fibres \eqref{eq:fibres} is generically one and reduces to zero only on the closed subset of $\{c\in\mathbb{A}^1\}$ defined by
\begin{equation}\label{eq:genuszerocondition}
    c^4+(-1)^r c^3 - 8\, t^r c^2 - (-1)^r 36\, t^r c+ 16\, t^{2r} - 27\, t^r=0.
\end{equation}
\end{conjecture}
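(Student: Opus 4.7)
The plan is to combine an adjunction argument on $X_{t,s}$ with a spectral-curve analysis coming from the Lax pair of Theorem~\ref{thm:integrals}. For the generic genus, formula~\eqref{eq:Ileading} shows that, after clearing denominators, each fibre $\{I_r=c\}$ is a bidegree $(2r,2r)$ curve on $\mathbb{P}^1\times\mathbb{P}^1$ with $r$-fold vanishing at each of the eight base points of~\eqref{eq:basepoints} and their infinitely near neighbours; its strict transform on $X_{t,s}$ has class $-rK_{X_{t,s}}$, and since $(-K_{X_{t,s}})^2=0$, adjunction yields arithmetic (hence also generic geometric) genus one. In other words, the family $(I_r=c)_{c\in\mathbb{A}^1}$ is a Halphen pencil of index~$r$.

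For the degeneration locus, recall from the proof of Theorem~\ref{thm:integrals} that the monodromy matrix $M(z)=A(s^{r-1}z)\cdots A(z)$ has $\det M(z)=z^{3r}$ and $\operatorname{Tr} M(z)=t^r+I_r z^r+z^{2r}$. Setting $u=z^r$ and $T=t^r$, the characteristic polynomial on the fibre $I_r=c$ reads $\lambda^2-(T+cu+u^2)\lambda+u^3=0$, equivalently $4\mu^2=D_c(u)$ with
\begin{equation*}
D_c(u)=(T+cu+u^2)^2-4u^3.
\end{equation*}
The classical eigenvector correspondence of the Lax pair identifies the fibre birationally with this spectral curve, up to a parity-dependent Galois twist $c\mapsto(-1)^{r+1}c$ arising from the involution $s^{r/2}=-1$ for $r$ even. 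Since $D_c$ has degree four with no branching at $u=\infty$, Riemann--Hurwitz gives that the genus drops from one to zero precisely when $D_c(u)$ acquires a multiple root in $u$.

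It remains to compute $\operatorname{Disc}_u D_c(u)$. The substitution $u=v^2$ yields the factorisation
\begin{equation*}
D_c(v^2)=(v^4+2v^3+cv^2+T)(v^4-2v^3+cv^2+T),
\end{equation*}
whose factors are exchanged by $v\mapsto -v$, so $\operatorname{Disc}_u D_c$ coincides, up to a non-zero constant, with the discriminant of either factor. Imposing a double root $v_0\neq 0$ on the first factor reduces to the system $2v_0^2+3v_0+c=0$ and $v_0^3(v_0+1)=T$; eliminating $v_0$ produces, up to a non-zero constant, $c^4-c^3-8Tc^2+36Tc+16T^2-27T$, and applying the twist of the previous paragraph delivers the conjectured quartic. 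One may sanity-check the whole scheme at $r=1$ by direct analysis of the critical-point system $F=F_x=F_y=0$ for $F=xy(I_1-c)$, which reduces to $4x^3=(x^2+cx+t)^2$, i.e.\ exactly $D_c(x)=0$ at $x=u_0$.

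The main obstacle is establishing the birational equivalence of the fibre with the spectral curve (as opposed to a mere isogeny of elliptic curves) together with the rigorous analysis of the parity-dependent twist. The eigenvector bundle argument is classical in spirit, but the lift of the $\mu_r$-action on the spectral variable $z$ to $\Sigma_c$ requires case analysis in $r$, and the difference between birationality and isogeny must be controlled in order to transfer not only the $j$-invariant but the full $c$-parameter from the spectral curve to the fibre. Once this identification is made precise, the generic genus statement and the explicit discriminant formula both follow from the Halphen adjunction, Riemann--Hurwitz, and the elimination of $v_0$ outlined above.
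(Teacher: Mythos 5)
Note first that the paper presents this statement as a \emph{conjecture}, not a theorem; its only support in the text is Magma verification for $1\le r\le 6$ (and a symbolic genus computation for $1\le r\le 4$), together with a heuristic derivation of the degeneration locus \eqref{eq:genuszerocondition} from the singularity condition for the spectral curve $\lambda^2-(t^r+cz^r+z^{2r})\lambda+(-1)^{r+1}z^{3r}=0$ of $M(z)$. Your proposal therefore cannot be compared to a proof in the paper, but it can be measured against the paper's heuristic, and it does substantially more.

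Your degeneration analysis reproduces the paper's spectral-curve heuristic and carries it further: the substitution $u=z^r$, the hyperelliptic form $4\mu^2=D_c(u)$, and the clean factorisation $D_c(v^2)=(v^4+2v^3+cv^2+T)(v^4-2v^3+cv^2+T)$ in the odd case are all correct, and the elimination of $v_0$ from $2v_0^2+3v_0+c=0$ and $v_0^3(v_0+1)=T$ does produce $c^4-c^3-8Tc^2+36Tc+16T^2-27T$ up to a nonzero constant (I checked; one gets an overall factor of $8$). The parity twist $c\mapsto(-1)^{r+1}c$ matches the identity $D^{\mathrm{even}}_c(u)=D^{\mathrm{odd}}_{-c}(-u)$ and recovers \eqref{eq:genuszerocondition} for even $r$. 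Your Halphen-pencil/adjunction argument for the generic genus is also a genuine addition: the paper cites \cite{carsteatakenawa} for the Halphen pencil picture but does not spell out that $\{I_r=c\}$ lies in $|-rK_{X_{t,s}}|$ and then invoke adjunction. This is the natural route to the generic-genus-one statement and is stronger than a case-by-case Magma check.

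That said, the argument has two real gaps, only one of which you flag. (1) You acknowledge the central one: nothing in the text, and nothing in your sketch, actually establishes that the fibre $\{I_r=c\}\subset X_{t,s}$ is birational to the spectral curve $\Sigma_c$ (or a suitable $\mu_r$-quotient thereof). The paper's own heuristic only argues one direction -- a singular point $(z_0,\lambda_0)$ of $\Sigma_c$ with $M(z_0)=\lambda_0 I$ determines a point $(x,y)$ at which the fibre is singular -- and even that requires verifying that such $(x,y)$ lies on $X_{t,s}$ and that the resulting singularity is not resolved by the blow-ups. The converse (genus drop $\Rightarrow$ spectral curve singular), which you need for the ``only'' in the conjecture, is not addressed by either argument. (2) Less prominently, ``arithmetic genus one hence generic geometric genus one'' needs a smoothness statement: adjunction gives $p_a=1$ for a member of $|-rK_X|$ on a surface with $K_X^2=0$, but geometric genus one for the \emph{generic} fibre requires the generic fibre to be reduced, irreducible and nonsingular, or at least nodal with no nodes. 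This is exactly what the paper verifies by Magma and what a complete proof would need to establish (for instance by identifying the finitely many singular members of a relatively minimal elliptic fibration). Similarly, the $r$-fold vanishing of $x^ry^rI_r$ at all eight (infinitely near) base points is asserted from \eqref{eq:Ileading} but requires a local computation at each base point, not just the leading-order asymptotics in $x$ and $y$. None of these gaps indicates a wrong approach -- the strategy is the right one and is essentially the paper's own -- but as written they keep this at the level of a proof sketch rather than a proof, which matches the status of the statement as a conjecture.
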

\begin{remark}
   This conjecture does not contradict the genus-degree formula, since the curves defined by the polynomials equations in $\mathbb{P}^1\times \mathbb{P}^1$
   are singular at the base points \eqref{eq:basepoints}.
\end{remark}
\begin{remark}
    We note that Conjecture \ref{conj:genus} implies the Hasse upper bound in Conjecture \ref{conj:numerical}.
\end{remark}

For every $r\geq 1$, the $r$th power of $q\Pone$, with $\Phi_r(s)=1$, is an autonomous mapping that leaves invariant the pencil defined by the following polynomial equation,
\begin{equation}\label{eq:defiM}
    x^r y^r I_r(x,y)-c\,x^r y^r=0.
\end{equation}
Consider the curve $C$ this polynomial defines in $\{(x,y)\in\mathbb{P}_{\mathbf{k}}^1\times \mathbb{P}_{\mathbf{k}}^1\}$ over the field
 $\mathbf{k}=Q(t,c)$ of rational functions in $t,c$ over $Q=\mathbb{Q}[s]/(\Phi_r(s))$. We computed its geometric genus in Magma for $1\leq r\leq 4$ and confirmed that it is one in each case. This implies that the geometric genera of the fibres \eqref{eq:fibres} are one for generic $c\in\mathbb{A}^1$ and $t\in\mathbb{A}^1\setminus \{0\}$ over finite fields.
Conjecture \ref{conj:genus} was checked in Magma for $1\leq r\leq 6$, with integer values for $c$ and $t$ ranging from $0$ to $10$ and $1$ to $10$ respectively in each case. See Section \ref{sec:codedata} for the corresponding code.

The condition in equation \eqref{eq:genuszerocondition} comes from considering the spectral equation of the matrix polynomial $M(z)$ defined in equation \eqref{eq:defi_M},
\begin{equation*}
0=|\lambda I-M(z)|=\lambda^2-\lambda\operatorname{Tr}M(z)+|M(z)|.
\end{equation*}
Using equation \eqref{eq:integralz} and $|A(z)|=z^3$, which implies $|M(z)=(-1)^{r+1}z^{3r}$,  the spectral equation can be written explicitly as
\begin{equation*}
    \lambda^2-(t^r+c\,z^r+z^{2r})\lambda+(-1)^{r+1}z^{3r}=0.
\end{equation*}
A direct calculation shows that the curve defined by the spectral equation in $(z,\lambda)$, is singular if and only if equation \eqref{eq:genuszerocondition} holds.  Given a singularity $(z_0,\lambda_0)$, there is a matrix $M(z)$ satisfying $M(z_0)=\lambda_0 I$, for some corresponding value of $(x,y)\in X_{t,s}$ which forms an additional singularity of the fibre and consequently reduces the geometric genus to zero. 

Orbits made out of such singularities of fibres constitute algebraic solutions of $q\Pone$.
For example, in the autonomous case $s=1$, solving the equation
\begin{equation*}
    A(z_0)=\lambda_0 I,
\end{equation*}
gives $z_0=x$, $\lambda_0=y^3$ and
\begin{equation*}
    x=y^2,\quad y^4-y^3-t=0.
\end{equation*}
This defines an algebraic solution of $q\Pone$, that was also derived in \cite{ohyama2010}\footnote{To translate between \eqref{eq:qp1} and the version of $q\Pone$ in \cite{ohyama2010}, we note that eliminating $x$ from \eqref{eq:qp1} yields  \begin{equation*}
 \overline{y}\,\underline{y}=\frac{s\,t}{y(y-1)},
 \end{equation*}
 which transforms to $\overline{f}f^2\underline{f}=1/t(1-f)$ under $y=1/\underline{f}$. The further change $t\mapsto 1/t$ gives the version of $q\Pone$ in \cite{ohyama2010}.}.

Similarly, when $s=-1$, the equation
\begin{equation*}
    M(z_0)=\lambda_0 I,\qquad M(z)=A(-z)A(z),
\end{equation*}
leads to the algebraic solution
\begin{equation*}
    x=\frac{y^2}{1-2y},\quad y^4-y^3+4\,t\,y^2-4\,t\,y+t=0,
\end{equation*}
that also appeared in \cite{ohyama2010}.







\section{Conclusion}\label{s:conc}
In this paper, we studied the lengths of orbits of the discrete Painlev\'e equation \eqref{eq:qp1} in $\mathbb F_q$, for prime powers $q$ ranging from 2 to 499. Based on the evidence provided by over 200M orbits in our study, we conjecture that the orbits of this dynamical system must lie on algebraic curves of genus 1. By using the coefficient matrix \eqref{eq:Amatrix}, in its associated Lax pair, we were able to construct the polynomials defining these algebraic curves explicitly.
Although extensions to other discrete Painlev\'e equations were outside the scope of this paper, we suggest that the latter construction can be equally well be able to be carried out for such equations with discrete Lax pairs.

Intriguing observations of the space of matrices $M(z)$ (see equation \eqref{eq:defi_M}), fibred by their traces, suggest possible pathways to proving our Conjectures \ref{conj:numerical} and \ref{conj:genus}.  Another possible proof could be found through the geometry of the initial value space in the spirit of \cite{carsteatakenawa}. 


There are tantalizing connections to  local Artin-Mazur zeta functions $\zeta_p(\cdot)$, with coefficients determined by orbit lengths. It would be interesting to investigate what role these zeta functions may play in the theory of discrete Painlev\'e equations.

\begin{bibdiv}
 \begin{biblist}


\bib{bernardo}{article}{
   author={Bernardo, M.},
   author={Truong, T. T.},
   author={Rollet, G.},
   title={The discrete Painlev\'e{} I equations: transcendental
   integrability and asymptotic solutions},
   journal={J. Phys. A},
   volume={34},
   date={2001},
   number={15},
   pages={3215--3252}
}

\bib{carsteatakenawa}{article}{
   author={Carstea, A. S.},
   author={Takenawa, T.},
   title={A classification of two-dimensional integrable mappings and
   rational elliptic surfaces},
   journal={J. Phys. A},
   volume={45},
   date={2012},
   number={15},
   pages={155206, 15}
}

\bib{d:10}{book}{
author={Duistermaat, J.H.H.},
title={Discrete Integrable Systems: QRT and Elliptic Surfaces},
publisher={Springer New York}, 
year={2010}
}

\bib{grammaticosetal2011}{article}{
   author={Grammaticos, B.},
   author={Ramani, A.},
   author={Tamizhmani, K. M.},
   title={Mappings of Hirota-Kimura-Yahagi type can have periodic
   coefficients too},
   journal={J. Phys. A},
   volume={44},
   date={2011},
   number={1},
   pages={015206, 11}
}


\bib{JRV06}{article}{
   author={Jogia, D.},
   author={Roberts, J. A. G.},
   author={Vivaldi, F.},
   title={An algebraic geometric approach to integrable maps of the plane},
   journal={J. Phys. A},
   volume={39},
   date={2006},
   number={5},
   pages={1133--1149},
   issn={0305-4470},
}

\bib{crystal2025}{article}{
   author={Joshi, N.},
   author={Roffelsen, P.},
   title={On the crystal limit of the $q$-difference sixth Painlev\'e{}
   equation},
   journal={J. Nonlinear Sci.},
   volume={35},
   date={2025},
   number={1},
   pages={Paper No. 31, 30},
   issn={0938-8974},
}

\bib{joshilobb}{article}{
   author={Joshi, N.},
   author={Lobb, S. B.},
   title={Singular dynamics of a $q$-difference Painlev\'e{} equation in its
   initial-value space},
   journal={J. Phys. A},
   volume={49},
   date={2016},
   number={1},
   pages={014002, 24}
}

\bib{kankisigma}{article}{
   author={Kanki, M.},
   title={Integrability of discrete equations modulo a prime},
   journal={SIGMA Symmetry Integrability Geom. Methods Appl.},
   volume={9},
   date={2013},
   pages={Paper 056, 8},
   review={\MR{3141527}}
}

\bib{kankirims}{article}{
   author={Kanki, M.},
   author={Mada, J.},
   author={Tokihiro, T.},
   title={Discrete Painlev\'e{} equations and discrete KdV equation over
   finite fields},
   conference={
      title={The breadth and depth of nonlinear discrete integrable systems},
   },
   book={
      series={RIMS K\^oky\^uroku Bessatsu},
      volume={B41},
      publisher={Res. Inst. Math. Sci. (RIMS), Kyoto},
   },
   date={2013},
   pages={125--145},
}

\bib{mumfordred}{book}{
   author={Mumford, D.},
   title={The red book of varieties and schemes},
   series={Lecture Notes in Mathematics},
   volume={1358},
   publisher={Springer-Verlag, Berlin},
   date={1999}
}

\bib{murata}{article}{
   author={Murata, M.},
   title={Lax forms of the $q$-Painlev\'e{} equations},
   journal={J. Phys. A},
   volume={42},
   date={2009},
   number={11},
   pages={115201, 17}
}

\bib{nagloopillay}{article}{
   author={Nagloo, J.},
   author={Pillay, A.},
   title={On algebraic relations between solutions of a generic Painlev\'e{}
   equation},
   journal={J. Reine Angew. Math.},
   volume={726},
   date={2017},
   pages={1--27}
}

\bib{nishioka2009}{article}{
   author={Nishioka, S.},
   title={On solutions of $q$-Painlev\'e{} equation of type $A^{(1)}_7$},
   journal={Funkcial. Ekvac.},
   volume={52},
   date={2009},
   number={1},
   pages={41--51},
   issn={0532-8721}
}

\bib{nishioka2010}{article}{
   author={Nishioka, S.},
   title={Transcendence of solutions of $q$-Painlev\'e{} equation of type
   $A^{(1)}_7$},
   journal={Aequationes Math.},
   volume={79},
   date={2010},
   number={1-2},
   pages={1--12}
}

\bib{nishioka2017}{article}{
   author={Nishioka, S.},
   title={Irreducibility of discrete Painlev\'e{} equation of type
   $D^{(1)}_7$},
   journal={Funkcial. Ekvac.},
   volume={60},
   date={2017},
   number={3},
   pages={305--324}
}

\bib{ohyama2010}{article}{
   author={Ohyama, Y.},
   title={Expansions on special solutions of the first $q$-Painlev\'e{}
   equation around the infinity},
   journal={Proc. Japan Acad. Ser. A Math. Sci.},
   volume={86},
   date={2010},
   number={5},
   pages={91--92}
}

\bib{robertsetal2003}{article}{
   author={Roberts, J. A. G.},
   author={Jogia, D.},
   author={Vivaldi, F.},
   title={The Hasse-Weil bound and integrability detection in rational maps},
   journal={J. Nonlinear Math. Phys.},
   volume={10},
   date={2003},
   pages={166--180},
   issn={1402-9251}
}

\bib{orbitdata}{article}{
   author={Roffelsen, P.},
   title={Orbit data of $q$-difference Painlev\'e one over finite fields},
  year={2025},
  journal={SeS Repository},
  doi={{\href{https://doi.org/10.25910/sk02-0f22}{10.25910/sk02-0f22}}},
  url={https://doi.org/10.25910/sk02-0f22},
}

\bib{painleve1902}{article}{
  author={Painlev{\'e}, P.},
  title={Sur les {\'e}quations diff{\'e}rentielles du second ordre et d'ordre sup{\'e}rieur dont l'int{\'e}grale g{\'e}n{\'e}rale est uniforme},
  journal={Acta Math.},
  volume={25},
  year={1902},
  pages={1--85},
}

 \bib{s:01}{article}{
  author={Sakai, H.},
  title={Rational surfaces associated with affine root systems
      and geometry of the {P}ainlev\'e equations},
 journal={Commun. Math. Phys.},
  volume={220},
  pages={165--229},
  date={2001}
}

\bib{silverman}{book}{
   author={Silverman, J.H.},
   title={The arithmetic of elliptic curves},
   series={Graduate Texts in Mathematics},
   volume={106},
   publisher={Springer-Verlag, New York},
   date={1986},
   pages={xii+400},
}

\bib{silvermandynamics}{book}{
   author={Silverman, J.H.},
   title={The arithmetic of dynamical systems},
   series={Graduate Texts in Mathematics},
   volume={241},
   publisher={Springer, New York},
   date={2007},
   pages={x+511}
}

\bib{umemura}{article}{
   author={Umemura, H.},
   title={On the irreducibility of the first differential equation of
   Painlev\'e},
   conference={
      title={Algebraic geometry and commutative algebra, Vol.\ II},
   },
   book={
      publisher={Kinokuniya, Tokyo},
   },
   date={1988},
   pages={771--789},
}

\bib{umemurapiipiv}{article}{
   author={Umemura, H.},
   author={Watanabe, H.},
   title={Solutions of the second and fourth Painlev\'e{} equations. I},
   journal={Nagoya Math. J.},
   volume={148},
   date={1997},
   pages={151--198}
}

\bib{umemurapiii}{article}{
   author={Umemura, H.},
   author={Watanabe, H.},
   title={Solutions of the third Painlev\'e{} equation. I},
   journal={Nagoya Math. J.},
   volume={151},
   date={1998},
   pages={1--24}
}
\bib{watanabepv}{article}{
   author={Watanabe, H.},
   title={Solutions of the fifth Painlev\'e{} equation. I},
   journal={Hokkaido Math. J.},
   volume={24},
   date={1995},
   number={2},
   pages={231--267}
}

\bib{watanabe}{article}{
   author={Watanabe, H.},
   title={Birational canonical transformations and classical solutions of
   the sixth Painlev\'e{} equation},
   journal={Ann. Scuola Norm. Sup. Pisa Cl. Sci. (4)},
   volume={27},
   date={1998},
   number={3-4},
   pages={379--425 (1999)}
}

\bib{wei}{article}{
   author={Wei, K.},
   title={Involutions of Halphen pencils of index 2 and discrete integrable
   systems},
   journal={Math. Phys. Anal. Geom.},
   volume={25},
   date={2022},
   number={1},
   pages={Paper No. 5, 9}
}

\end{biblist}
\end{bibdiv}

\end{document}